\DeclareMathOperator*{\argmin}{arg\,min}
\newtheorem{Theorem}{Theorem}
\newtheorem{Lemma}{Lemma}
\newtheorem{Proposition}{Proposition}
\def\BibTeX{{\rm B\kern-.05em{\sc i\kern-.025em b}\kern-.08em
		T\kern-.1667em\lower.7ex\hbox{E}\kern-.125emX}}
\begin{document}
	\title{On User Association in Large-Scale Heterogeneous LEO Satellite Network\\
	}
	\author{Yuan Guo,~\IEEEmembership{Student Member,~IEEE,}
		Christodoulos~Skouroumounis,~\IEEEmembership{Member,~IEEE,}
		Symeon~Chatzinotas,~\IEEEmembership{Fellow,~IEEE,}
		and~Ioannis~Krikidis,~\IEEEmembership{Fellow,~IEEE}
			\thanks{This work received funding from the European Research Council (ERC) under the European Union’s Horizon 2020 research and innovation programme (Grant agreement No. 819819). This work was also supported by the  European Union HORIZON programme under iSEE-6G (Grant agreement No. 101139291), and by the Luxembourg National Research Fund (FNR) under the project MegaLEO (C20/IS/14767486). }
		\IEEEcompsocitemizethanks{\IEEEcompsocthanksitem Yuan Guo, Christodoulos Skouroumounis, and Ioannis Krikidis are with the IRIDA Research Centre for Communication Technologies,
			Department of Electrical and Computer Engineering, University of Cyprus,
			e-mail: \{yguo0001, cskour03, krikidis\}@ucy.ac.cy. Symeon Chatzinotas is with the Interdisciplinary Centre for Security, Reliability and Trust (SnT), University of Luxembourg, e-mail: symeon.chatzinotas@uni.lu. Part of this work was presented at the  IEEE International Conference on Communications, Rome, Italy, May 2023 \cite{Yuan}.}}
	\maketitle
	\vspace{-12mm}
	\begin{abstract}
			In this paper, we investigate the performance of large-scale heterogeneous low Earth orbit (LEO) satellite networks in the context of three association schemes. In contrast to existing studies, where single-tier LEO satellite-based network deployments are considered, the developed framework captures the heterogeneous nature of real-world satellite network deployments. More specifically, we propose an analytical framework to evaluate the performance of multi-tier LEO satellite-based networks, where the locations of LEO satellites are approximated as points of independent Poisson point processes, with different density, transmit power, and altitude. We propose three association schemes for the considered network topology based on: 1) the Euclidean distance, 2) the average received power, and 3) a random selection. By using stochastic geometry tools, analytical expressions for the association probability, the downlink coverage probability, as well as the spectral efficiency are derived for each association scheme, where the interference is considered. Moreover, we assess the achieved network performance under several different fading environments, including low, typical, and severe fading conditions, namely non-fading, shadowed-Rician and Rayleigh fading channels, respectively. Our results reveal the impact of fading channels on the coverage probability, and illustrate that the average power-based association scheme outperforms in terms of achieved coverage and spectral efficiency performance against the other two association policies. Furthermore, we highlight the impact of the proposed association schemes and the network topology on the optimal number of LEO satellites, providing guidance for the planning of multi-tier LEO satellite-based networks in order to enhance network performance.
	\end{abstract}
	\vspace{-3mm}
	\begin{IEEEkeywords}
		Association scheme, coverage performance, heterogeneous networks, LEO satellites, spectral efficiency, stochastic geometry.
	\end{IEEEkeywords}
	\section{Introduction}
	The rapid evolution of massive Internet-of-Thing (IoT) devices demands the emergence of sixth-generation (6G) networks \cite{Nguyen}.  As we transition from fifth-generation (5G) to 6G, satellite communication networks are anticipated to play a pivotal role in realizing the vision of a fully connected and intelligent world. Satellite communications can significantly contribute to achieving the goals of 6G by providing seamless global coverage and enhanced connectivity in remote, rural, and underserved areas, as well as reinforcing the resilience of communication networks in disaster-stricken regions \cite{Fang,Yaacoub,Xiangming}. Moreover, satellite communication systems hold an immense potential for enabling advanced applications such as high-speed internet, ultra-low latency communication, and massive IoT deployments, empowering sectors like smart cities, agriculture, and transportation \cite{Centenaro}. Hence, the concept of satellite communication networks has emerged as a critical research paradigm, requiring a comprehensive understanding of its potential impact, challenges, and opportunities \cite{Qu,Di,Mohammed}.
	
	Satellite communication networks are characterized by a diverse range of orbital configurations, such as geostationary (GEO), medium Earth orbit (MEO), and low Earth orbit (LEO) satellites. Owing to the unique benefits of LEO satellites, such as minimal latency, superior spatial flexibility, and low-cost deployment, the employment of LEO satellites has emerged as an attractive solution for 6G networks \cite{Liu,Xie,Hraishawi}. However, the proliferation of LEO satellites and mega constellations pose significant challenges on network modelling, design and analysis, necessitating the employment of sophisticated mathematical tools that accurately capture the characteristics of large-scale LEO satellite networks. Recently, tools from the field of stochastic geometry have been leveraged to analyze the performance of large-scale LEO satellite-based communication systems, highlighting its effectiveness as a powerful and tractable mathematical tool for assessing the impact of key design parameters on network performance \cite{Wang,Akram1,Akram2,Jung,Jia}. Specifically, the authors in \cite{Wang} study the performance of large-scale LEO satellite communication networks, illustrating the impact of satellites' density and constellation altitude on the coverage and communication latency. Moreover, the authors  in \cite{Akram1} investigate the coverage performance of  LEO satellite networks, by assuming that the satellite spatial deployment follows a homogeneous Poisson point process (PPP), where a novel path-loss model is introduced for the satellite communication networks. This work is further extended in \cite{Akram2}, where the optimal satellite altitude for attaining the highest coverage probability is numerically investigated. Furthermore, the authors in \cite{Jung} study the downlink performance of LEO satellite-based networks by modelling the spatial deployments of LEO satellites according to a homogeneous binomial point process (BPP) and by investigating an iterative algorithm to maximize the transmission rate and system throughput. The authors in \cite{Jia} employ stochastic geometry to analyze uplink performance of large-scale LEO satellite networks, demonstrating the optimal LEO satellite density for achieving the maximum ergodic capacity.  Additionally, the authors in \cite{Okati2} adopt a non-homogeneous PPP to model a massive LEO network and demonstrate the optimum altitude of LEO satellites as well as number of orthogonal frequency channels for achieving the highest network throughput. Nevertheless, these studies highlight the capability of stochastic geometry in modelling large-scale LEO satellite-based networks, under the assumption that  LEO satellites are deployed at the same altitude, opposed to real-world LEO satellite-based configurations \cite{Xie,Wang,Talgat}. Recently, the multi-altitude setup of LEO satellites has been discussed in several works, by considering the same type of LEO satellites deployed at various altitudes  \cite{Talgat,Qiu10119066,choi2023cox,Wang10184196}. In specific, the authors in \cite{Talgat} and \cite{choi2023cox} model the multi-tier LEO satellites by using binomial point process and Cox point process, respectively. In addition, a simulation-based study is presented in \cite{Qiu10119066} to evaluate the  effect of interference on the downlink systems. Moreover, the authors in \cite{Wang10184196} investigate the performance of multi-hop routing in multi-tier LEO satellite networks. Although these works have explored the altitude variations in LEO satellite-based networks, the real-world multi-tier LEO satellites network exhibit more significant heterogeneity, such as different transmit power and beamwidth of LEO satellites, etc, which are overlooked from the current literature.
	
	Given the heterogeneity presented in large-scale LEO satellite-based networks, an appropriate association scheme, which methodically assigns each user to a specific serving satellite, is of paramount importance for enhancing the end-user performance. Various association schemes for conventional terrestrial networks have been extensively explored in the literature, that are either based on contact distance, cell area, received signal strength, or other quality of service metrics \cite{Marco,Jo,Alzenad8833522}. In specific, the authors in \cite{Marco} propose two association schemes for millimeter wave networks, by taking into consideration of the path-loss and the received signal power, respectively. Additionally, a flexible association scheme is proposed in \cite{Jo}, where a mobile user is associated with the strongest base stations (BSs) in terms of long-term averaged biased power. The authors in \cite{Alzenad8833522} study a two-tier vertical heterogeneous network consisting of terrestrial and aerial BSs, where an average received power-based association is adopted. However,  association schemes for large-scale heterogeneous LEO satellite networks are missing from the current literature; the majority of works that investigate massive LEO satellite networks either ignore the heterogeneity or assume simplistic distance-based association schemes \cite{Okati2,Akram1,Akram2,Jung,Jia}. This is mainly due to the challenges arising from the different altitudes of multi-tier LEO satellites that further escalate the heterogeneity of networks \cite{Yuan}. More specifically, unlike conventional terrestrial networks, where the omission of BSs’ heights does not deprive from the accuracy of  analysis, the altitude variation among multi-tier LEO satellites  serves as a critical factor, significantly affecting the coverage area and signal strength.  The diverse altitudes of LEO satellites not only lead to a higher level of complexity in the design of appropriate association schemes but also pose substantial challenges on analyzing the network performance.
	
	Motivated by the above discussion, the aim of this work is to fill these gaps by modelling and analyzing the performance of multi-tier LEO satellite-based networks. Specifically, we extend our previous work \cite{Yuan} by developing efficient association schemes to enhance the downlink performance of the considered large-scale LEO satellite networks. The main contributions of this paper are summarized as follows:
	\begin{itemize}
		\item We develop an analytical framework based on stochastic geometry, which sheds light on the modelling and analysis of large-scale heterogeneous LEO satellite networks from a macroscopic point-of-view. The developed framework takes into account the real-world heterogeneous nature of large-scale LEO satellite networks, which involves the deployment of a general number of tiers of satellites at various altitudes with different transmit power and density; the  locations of LEO satellites are modelled as the points of multiple independent random point processes.
		\item We propose three association schemes for the considered LEO satellite networks, each with a distinct focus that collectively enhances network performance. The first scheme aims at minimizing communication path-loss, focusing on reducing the attenuation of signals  to improve the quality of the communication link. The second scheme is dedicated to maximizing received signal strength, thereby ensuring robust communication by enhancing the power of received signals. Finally, the third scheme addresses the computational limitations of IoT devices by streamlining the association process, making satellite network access more feasible for devices with restricted computational capabilities.
		\item  We evaluate the performance of each association scheme, in terms of association probability, coverage probability and spectral efficiency, by considering different fading scenarios, namely, shadowed-Rician (SR), Rayleigh and non-fading channels. By leveraging tools from stochastic geometry, we initially evaluate the distance and the average signal power distributions, based on which analytical expressions for the association probability with the employment of each association schemes are derived. Additionally, we derive the analytical expressions for the achieved coverage probability and spectral efficiency in the context of proposed association schemes, with SR, Rayleigh and non-fading channels.
		\item Our results reveal the negative effect of fading channels on the coverage probability for low signal-to-interference-plus-noise ratio (SINR) thresholds, and their beneficial impact in high SINRs. In addition, the numerical results illustrate that the average power-based association scheme outperforms the distance-based association schemes in terms of coverage probability and spectral efficiency, while the random selection-based scheme acts as a performance lower bound. Finally, we highlight the trade-off imposed by the number of LEO satellites between enhancing the coverage performance and escalating the observed multi-user interference, while the optimal number of LEO satellites is demonstrated for achieving the highest coverage probability. 
	\end{itemize}
	The rest of the paper is organised as follows: Section \ref{sm} introduces the considered system model. Section \ref{associationscheme} presents our proposed association schemes along with the detailed analysis of association probability. Section \ref{downper} presents the downlink network performance. Finally, numerical   results are presented in Section \ref{numsim}, followed by our conclusions in Section \ref{conclu}.
	
	\textit{Notation}: $\Gamma(\cdot)$ and $\gamma(\cdot,\cdot)$ denote the Gamma and the lower incomplete Gamma functions, respectively;$ \ {}_1 F_1\left(\cdot;\cdot;\cdot\right)$  is the confluent hypergeometric function of the first kind;  $(n)!$ denotes the factorial of $n$; $(m)_z$ denotes the Pochhammer symbol; $\Im\{ \cdot \}$ denotes the imaginary operator; $\mathcal{H}(\cdot)$ denotes the Heaviside function.
	\vspace{-5mm}
	\section{System model}\label{sm}
	\begin{table*}[t]\centering
		\caption{Summary of notations.}\label{Table1}
		\scalebox{0.8}{
			\begin{tabular}{| l | l || l | l |}\hline
				\textbf{Notation} & \textbf{Description} & \textbf{Notation} & \textbf{Description}\\\hline
				$\Phi_k$, $\lambda_k$ & PPP of the $k$-th tier LEO satellites with intensity $\lambda_k$ & $H_k$ & Altitude of the $k$-th tier LEO satellites  \\\hline
				$x_{i,k}$ & Location of the $i$-th LEO satellite that belongs to the $k$-th tier &   $\alpha$ & Path-loss exponent \\\hline
				$r_{i,k}$ & Distance between the typical gUE and the LEO satellite located at $x_{i,k}$& $R_{\oplus}$ & Radius of the Earth  \\\hline
				$G_{m,k}^L,G_{s,k}^L$ & Main- and side-lobe antenna gain of the $k$-th tier LEO satellites& $G_m^U,G_s^U$ & Main- and side-lobe antenna gain of the gUEs \\\hline
				$R_{\mathrm{max}_k}$ &Maximum distance from a gUE to the $k$-th tier satellites& $K$ & The number of tiers of LEO satellites \\\hline
				$(b,\Omega,m)$ &  Parameters of the shadowed-Rician fading model&  $P_k$ & Transmit power of the $k$-th tier LEO satellites  \\\hline
				$N_k$ & Average number of the $k$-th tier LEO satellites & $\sigma^2$& Thermal noise power \\\hline
		\end{tabular}}
		\vspace{-8mm}
	\end{table*}
	In this section, we provide details of the considered system model. The network is studied from a large-scale perspective based on stochastic geometry. The main mathematical notation related to the system model is summarized in Table \ref{Table1}.
	\subsection{Network model}
		We consider a multi-tier constellation setup for a downlink LEO satellite-based network as illustrated in Fig. \ref{systemmodelfigure}, where LEO satellites are deployed at $K$ different spherical surfaces concentric with the Earth, of altitudes $H_k$ above the mean sea level, with $H_k< H_{k+1}$, $\forall\ {} 1\le k \le K-1$.
		For facilitating the analytical tractability, we assume that at any certain instance, the locations and the orbits of LEO satellites are uncorrelated; while within each tier constellation, the LEO  satellites' locations are assumed to be distributed according to a homogeneous PPP $\Phi_k$ with intensity $\lambda_k=\frac{N_k}{4\pi(R_{\oplus}+H_k)^2}$, where $N_k$ is the average number of LEO satellites within the $k$-th tier; while $\Phi_i$ and $\Phi_j$ for $i\neq j$ are assumed to be independent \cite{Akram1,Akram2,Wang}. Therefore, the average number of LEO satellites within each tier constellation is given as $N_k=4\pi(R_{\oplus}+H_k)^2\lambda_k$, where $R_{\oplus}$ is the Earth radius; while $\Phi_i$ and $\Phi_j$ for $i\neq j$ are assumed to be independent \cite{Akram1,Akram2,Jung}. Furthermore, we assume that the locations of ground user equipments (gUEs) follow a uniform distribution. Without loss of generality and based on the Slivnyak’s theorem, we perform our analysis from the perspective of the typical gUE located at $(0,0,R_{\oplus})$ as illustrated in Fig. \ref{systemmodelfigure}, while the results hold for any gUE in the network \cite{Haenggi}. Let $x_{i,k}$ denote the location of the $i$-th LEO satellite that belongs to the $k$-th tier constellation, and $r_{i,k}$ represent the Euclidean distance from the $i$-th LEO satellite to the typical gUE. Additionally, let $x_{0,k}$ with $1\le k\le K$ depict the locations of the closest LEO satellites to the typical gUE from each tier constellation. Since  a gUE can only receive signals from the LEO satellites above its local horizon, the maximum distance between a gUE and a visible LEO satellite that belongs to the $k$-th tier constellation is given by $R_{{\rm max}_k}=\sqrt{H_k^2+2H_kR_{\oplus}}$, where $1\le k\le K$ as illustrated in Fig. \ref{systemmodelfigure} \cite{Okati,Okati2}.
		\vspace{-5mm}
	\begin{figure}[t]
		\centering
		\includegraphics[width=7cm,height=7cm]{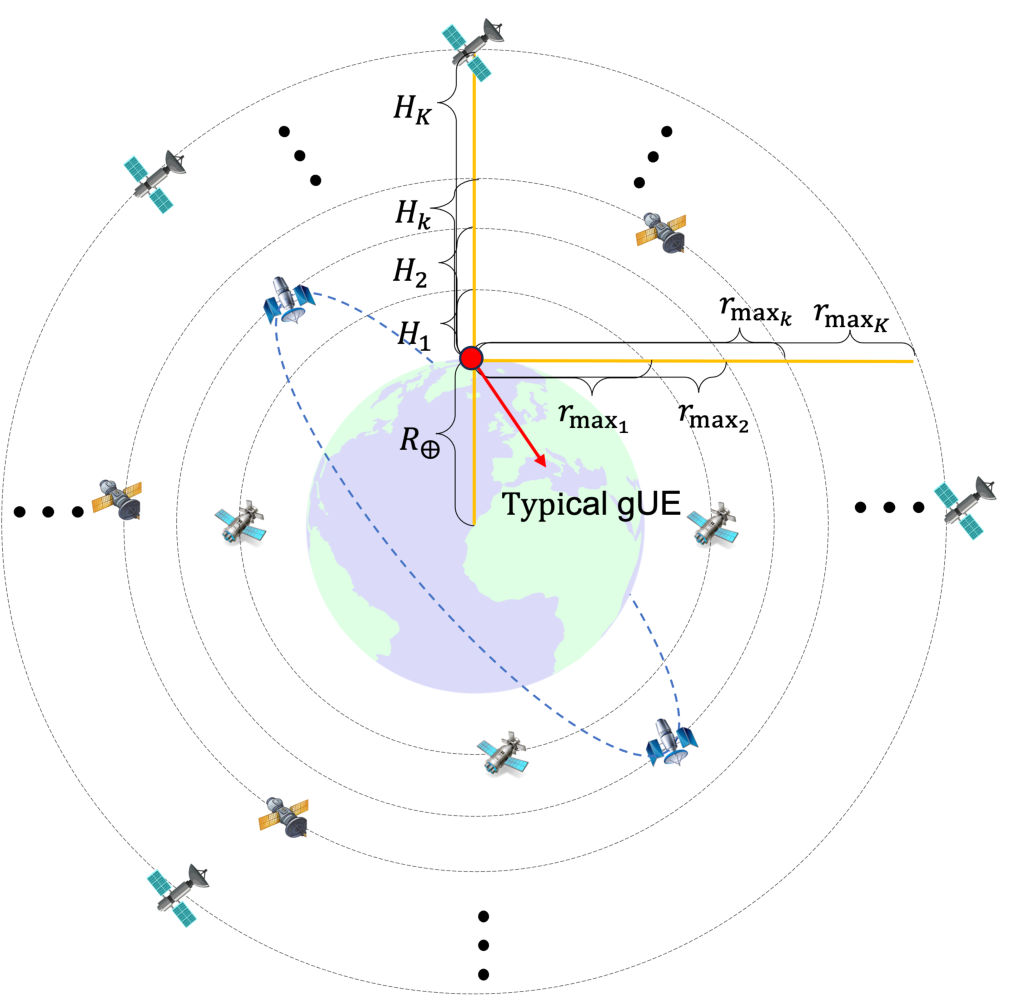}
		\caption{Network topology of a $K$-tier LEO satellites system.}
		\label{systemmodelfigure}
		\vspace{-8mm}
	\end{figure}
	\subsection{Satellite-to-ground channel model}
	We assume that the satellite-to-ground (StG) wireless channels experience both small-scale block fading and large-scale path-loss \cite{Akram1,Akram2,Jung,Jia,Okati2}. Regarding the small-scale fading, we adopt the well-known SR fading model for each link, which is typically a Rician fading channel with fluctuating line-of-sight (LoS) components  \cite{Jung}. Let $h_{i,j}$ denote the channel power gain of the link between the typical gUE and the $i$-th LEO satellite that belongs in the $j$-th tier.  Then,  the cumulative distribution function (CDF) of the channel power gain is given by
	\begin{align}\label{channelcdf}
		F_h(x)=&\left(\frac{2 b m}{2 b m+\Omega}\right)^m \sum_{z=0}^{\infty} \frac{(m)_z}{z ! \Gamma(z+1)}\left(\frac{\Omega}{2 b m+\Omega}\right)^z  \gamma\left(z+1, \frac{1}{2 b} x\right),
	\end{align}
	where $\Omega$ and $ 2b$ are the average power for the LoS and the multi-path components, respectively, and $ m$ is the fading parameters based on the Nakagami-$m$ fading channels \cite{Abdi}. Regarding the large-scale fading, we assume an unbounded singular path-loss model. More specifically, the path-loss effect of the StG channels only depends on the spatial distance between the LEO satellites and the gUEs; the path-loss between the typical gUE and a LEO satellite located at $x_{i,k}$ is given by $\ell(r_{i,k})=r_{i,k}^{-\alpha}$, where $\alpha>2$ is the path-loss exponent \cite{Talgat}. Finally, all StG wireless links exhibit additive white Gaussian noise with variance $\sigma^2$.

	We consider that both gUEs and LEO satellites are equipped with multiple antennas. For the sake of simplicity and tractability, we assume a widely-adopted simple sectorized model for the antenna directionality of LEO satellites and gUEs \cite{Marco,Park}. The adopted sectorized antenna model approximates the actual radiation pattern by a step function, where a constant main lobe gain is considered over the beamwidth as well as a constant side lobe gain is considered outside the beamwidth. This approach provides a suitable approximation of the actual beam pattern, while maintaining adequate accuracy for analysis purposes and does not deprive from the intuitions \cite{Marco,Park,Talgat}. Specifically, the antenna array gain of the LEO satellites that belongs in the $k$-th tier is given by  $G_k^L=\{G^L_{m,k},G^L_{s,k}\}$, where $G^L_{m,k}$ and $G^L_{s,k}$ represent the main- and the side-lobe gains of LEO satellites' antenna, respectively. Similarly, the antenna array gain of the gUE is characterized by two discrete values, i.e. the main-lobe gain $G^U_{m}$ and the side-lobe gain $G^U_{s}$. Furthermore, we assume that perfect beam alignment is achieved between each gUE and its serving LEO satellite by estimating the angle of arrival at both gUE and LEO satellite, such that the intended link between each gUE and its associated LEO satellite lies in the boresight direction of the antennas of both terminals. Additionally, we assume  the beam misalignment between each gUE and interfering LEO satellites. Therefore, the antenna array gain of the link between a gUE and the $k$-th tier LEO satellite is given by \cite{Park}
	\begin{equation}
		G=\begin{cases}
			G^L_{m,k}G^U_m, & \text{intended link},\\
			G^L_{s,k}G^U_s, & \text{interfering link}.
		\end{cases}
	\end{equation}
    It is worth mentioning that, although interfering satellites may also be in the boresight direction of the gUE's antenna, the employment of such assumption offers a tractable analysis for the considered large-scale LEO satellite networks. 
    \vspace{-5mm}
	\subsection{Power adjusting mechanism}\label{PAM}
	The altitudes of LEO satellites have a significant impact on the downlink network performance, e.g. higher altitudes of LEO satellites typically result in longer propagation distances of wireless signals, degrading the received signal quality at the gUEs. Moreover, according to International Telecommunication Union's regulations, the transmit power of LEO satellites is strictly constrained, such that the interference inflicted at existing GEO satellites systems is lower than a predefined threshold \cite{itu2003simulation,Hraishawi}. Motivated by the above discussion, we assume that a power adjusting mechanism is employed by the LEO satellites, where the transmit power of the LEO satellites of each tier is adjusted according to their altitudes, while the maximum achievable received signal power at a gUE from each tier LEO satellites are identical, i.e. $\frac{P_kG_{m,k}^L}{H_k^{\alpha}}=\frac{P_jG_{m,j}^L}{H_j^{\alpha}},\ {} \ {} \forall \ {} 1\le j,k\le K$, where $P_k$ is the transmit power of the $k$-th tier LEO satellites. It is worth mentioning that the adopted power adjusting mechanism not only captures the heterogeneity of multi-tier LEO satellite networks but also facilitates the analytical tractability of the developed framework.
	\section{Association schemes for Heterogeneous LEO satellite networks}\label{associationscheme}
	We propose three association schemes, which are associated with a two-stage procedure. In the first stage, $K$ LEO satellites are selected into the set of candidate LEO satellites; while in the second stage, the serving LEO satellite is selected among the set of candidate LEO satellites, according to the rules of the adopted association policy, i.e the Euclidean distance, the average received power over the fading channels, or the random selection\footnote{It should be noted that LEO satellites are operated on orbits with high velocity, and thus the potential handover process is required to keep associating the gUEs with specific serving LEO satellites. For simplicity, we assume that a soft-handover policy is employed to ensure the seamless connection between each gUE and its serving LEO satellite \cite{Akramhandover}.}. The three association schemes are presented and the analytical expression for the association probability for each scheme is derived.
	\vspace{-5mm}
	\subsection{Distance-based association (DbA) scheme}
	The DbA scheme is based on the Euclidean distance between the typical gUE and the LEO satellites. In specific, the DbA scheme enables each gUE to select and communicate with its nearest LEO satellite to maintain an acceptable received signal quality by minimizing the path-loss. Therefore, in the first stage of the DbA scheme, the nearest LEO satellites from each tier are selected into the set of candidate LEO satellites, i.e. $\mathcal{C}=\{x_{0,1},\cdots,x_{0,k},\cdots,x_{0,K}\}$. Then in the second stage, the typical gUE communicate with the closest LEO satellite from the set of candidate LEO satellites, e.g. $x_{0,k}=\argmin\nolimits_{x\in\mathcal{C}}\| x\|$ denotes the case where the typical gUE is associated with a LEO satellite belonging in the $k$-th tier. 
	It is worth mentioning that the DbA association scheme is a low-complexity scheme, which requires an a prior knowledge of the locations of LEO satellites, which can be obtained by monitoring the location of the LEO satellites through a low-rate feedback channel or by a global positioning system (GPS) \cite{YangGPS, HouraniGPS}. Hence, the DbA scheme holds significant utility for gUEs equipped with GPS capabilities.
	
	As the DbA relies on the Euclidean distance, we initially assess the distribution of the distance between the typical gUE and candidate LEO satellites, while the probability density function (PDF) of $r_{0,k}$ for $1\le k\le K$, is evaluated in the following lemma.
	\begin{Lemma}\label{contactdistance}
		The PDF of the distance between the typical gUE and its $k$-th candidate LEO satellite is given by
		\begin{align}\label{pdfcontactdistance}
			f_D(r,\lambda_k,H_k)=\frac{2\pi\lambda_kr(R_{\oplus}+H_k)}{R_{\oplus}}\exp\left(-\frac{\pi\lambda_k(R_{\oplus}+H_k)(r^2-H_k^2)}{R_{\oplus}}\right),
		\end{align}
		where $H_k\le r<R_{\mathrm{max}_k}$.
	\end{Lemma}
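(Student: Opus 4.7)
The plan is to obtain the PDF of the nearest-satellite distance in tier $k$ by first computing its CDF via the standard void-probability argument for PPPs, and then differentiating. Since $\Phi_k$ is a homogeneous PPP of intensity $\lambda_k$ on the sphere $\mathcal{S}_k$ of radius $R_\oplus+H_k$ centred at the Earth's core, the event $\{r_{0,k}>r\}$ is equivalent to $\Phi_k$ having no points inside the spherical cap $\mathcal{A}_k(r)\subset\mathcal{S}_k$ consisting of those points whose Euclidean distance to the typical gUE at $(0,0,R_\oplus)$ is at most $r$. By the void probability of a PPP this gives
\begin{equation*}
\mathbb{P}(r_{0,k}>r)=\exp\bigl(-\lambda_k\,|\mathcal{A}_k(r)|\bigr),
\end{equation*}
so the task reduces to evaluating the surface area $|\mathcal{A}_k(r)|$ of the cap.

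Next I would carry out the geometry. Let $\theta$ denote the central angle (at the Earth's centre) subtended by the chord joining the gUE and a satellite at Euclidean distance $r$. By the law of cosines on the triangle with sides $R_\oplus$, $R_\oplus+H_k$ and $r$,
\begin{equation*}
\cos\theta=\frac{(R_\oplus+H_k)^2+R_\oplus^2-r^2}{2R_\oplus(R_\oplus+H_k)}.
\end{equation*}
The spherical cap on $\mathcal{S}_k$ with half-angle $\theta$ has area $2\pi(R_\oplus+H_k)^2(1-\cos\theta)$, and substituting the cosine yields the clean expression
\begin{equation*}
|\mathcal{A}_k(r)|=\frac{\pi(R_\oplus+H_k)(r^2-H_k^2)}{R_\oplus},
\end{equation*}
valid on the admissible range $H_k\le r<R_{\mathrm{max}_k}=\sqrt{H_k^2+2H_kR_\oplus}$ (the lower bound being the sub-satellite distance, the upper bound corresponding to the horizon, i.e.\ $\cos\theta=R_\oplus/(R_\oplus+H_k)$).

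Combining the two steps gives $F_D(r)=1-\exp\!\bigl(-\pi\lambda_k(R_\oplus+H_k)(r^2-H_k^2)/R_\oplus\bigr)$, and differentiating with respect to $r$ produces the stated PDF, since $\tfrac{d}{dr}\bigl[\pi\lambda_k(R_\oplus+H_k)(r^2-H_k^2)/R_\oplus\bigr]=2\pi\lambda_k r(R_\oplus+H_k)/R_\oplus$. The only genuine obstacle is the cap-area computation: it is tempting but wrong to use a planar disc of radius $r$, so the main care is in correctly reducing the Euclidean distance $r$ (from an off-sphere observer on Earth's surface) to a central angle via the law of cosines and then invoking the Archimedean cap-area formula. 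Independence of the tiers plays no role here because the lemma concerns a single tier; it will only be needed when the association probability is computed from the collection $\{r_{0,k}\}_{k=1}^{K}$ of mutually independent nearest distances.
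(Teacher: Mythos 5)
Your proof is correct and follows essentially the same route as the paper's Appendix A: the void probability of the PPP $\Phi_k$ over the spherical cap of area $\Psi(r,H_k)=\pi(R_\oplus+H_k)(r^2-H_k^2)/R_\oplus$, followed by differentiation of the CDF. You merely spell out the cap-area computation (law of cosines plus the Archimedean formula) that the paper states without derivation, and that computation checks out.
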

	\begin{proof}
		See Appendix \ref{proofcontactdistance}.
	\end{proof}
	It is worth emphasizing that while several previous studies have focused on deriving the Earth-centered zenith angle distribution to study single-tier LEO satellite-based networks\footnote{For a single-tier LEO satellite network, the distance between a gUE and a LEO satellite can be uniquely determined by the Earth-centered zenith angle \cite{Akram1,Akram2,Wang,Talgat}.}, the distance distribution presented in Lemma \ref{contactdistance} consists of a more general result and can be used for both single- and multi-tier LEO satellite networks. Specifically, by letting $H_k=H$ and $\lambda_k=\lambda$, $\forall\ {} 1\le k\le K$, the corresponding PDF for the special scenario, namely, the single-tier deployment is obtained.
	
	Since each gUE can only communicate with the LEO satellites that are located above the local horizon, the probability that at least one $k$-th tier LEO satellite is located above the local horizon of the typical gUE is computed as
	\begin{align}\label{pdfcontactdistancere}
		\hspace{-3mm}P^{\circ}(N_k,H_k)=1-\exp\left(-\frac{\pi\lambda_k(R_{\oplus}+H_k)(R_{\mathrm{max}_k}^2-H_k^2)}{R_{\oplus}}\right)=&1-\exp\left(-\frac{N_k}{2\left({R_{\oplus}}/{H_k}+1\right)}\right).
	\end{align}
	Note that $R_{\oplus}\approx 6371$ km and the altitude of the LEO satellites, i.e. $H_k$, is typically from $500$ km to $2000$ km. Moreover, for the considered large-scale LEO satellite networks, each tier consists of large number of LEO satellites, i.e. $N_k\gg 100$. Hence, the exponential term approaches to zero, i.e. $\exp\left(-\frac{N_k}{2\left({R_{\oplus}}/{H_k}+1\right)}\right)\rightarrow 0$, and $P^{\circ}(N_k,H_k)\rightarrow 1$. Therefore, we exclude the extreme scenario where no LEO satellite exists above the local horizon of the gUEs throughout our analysis \cite{Akram1}. The accuracy of the adopted assumption is validated by our numerical and simulation results presented in Section \ref{numsim}. Then, by applying the result from Lemma \ref{contactdistance}, we present the association probability for the DbA scheme, denoted as $\mathcal{A}_{D,k}$ in the following lemma.
	\begin{Lemma}\label{associationprobability}
		The probability of the typical gUE to communicate with a LEO satellite from the $k$-th tier, for the case where DbA scheme is adopted, is given by
		\begin{equation}\label{assoeq}
			\mathcal{A}_{D,k}=\sum\limits_{i=k}^{K}\int\nolimits_{H_{i}}^{H_{i+1}}\frac{2\pi\lambda_kr(R_{\oplus}+H_k)}{R_{\oplus}}\prod\limits_{j=1}^{i}\exp\left(-\frac{\pi\lambda_j(R_{\oplus}+H_j)(r^2-H_j^2)}{R_{\oplus}}\right)\mathrm{d}r,
		\end{equation}
		where $H_{K+1}=r_{{\rm max}_K}$.
	\end{Lemma}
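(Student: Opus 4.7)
The plan is to compute $\mathcal{A}_{D,k}$ directly from the definition of the DbA scheme: the typical gUE is associated with the $k$-th tier precisely when its closest candidate satellite is the one in tier $k$, i.e.\ when $r_{0,k}=\min_{1\le j\le K} r_{0,j}$. Because the point processes $\Phi_1,\ldots,\Phi_K$ are mutually independent, the contact distances $r_{0,1},\ldots,r_{0,K}$ are independent random variables whose individual PDFs are given by Lemma \ref{contactdistance}. First I would condition on the event $r_{0,k}=r$, weight by $f_D(r,\lambda_k,H_k)$, and multiply by the conditional probability that every other tier's closest satellite is farther than $r$, namely
\begin{align*}
\mathcal{A}_{D,k}=\int_{H_k}^{R_{\mathrm{max}_k}} f_D(r,\lambda_k,H_k)\prod_{j\ne k}\Pr(r_{0,j}>r)\,\mathrm{d}r.
\end{align*}

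Next I would evaluate $\Pr(r_{0,j}>r)$ from Lemma \ref{contactdistance}. A $j$-th tier satellite can only exist at distances $\ge H_j$, so if $r<H_j$ this probability is trivially $1$. Otherwise, integrating the PDF in \eqref{pdfcontactdistance} from $H_j$ to $r$ (equivalently, using the void probability of the PPP restricted to the spherical cap of radial distance $r$) yields
\begin{align*}
\Pr(r_{0,j}>r)=\exp\!\left(-\frac{\pi\lambda_j(R_{\oplus}+H_j)(r^2-H_j^2)}{R_{\oplus}}\right),\quad r\ge H_j.
\end{align*}
The key observation is that the set of tiers whose CCDF is nontrivially included in the product changes with $r$. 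To handle this, I would partition the range of integration according to which tier altitudes lie below $r$: for $r\in[H_i,H_{i+1})$ with $i\ge k$, exactly the tiers $j=1,\ldots,i$ satisfy $H_j\le r$, so only those factors remain in the product.

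Finally, I would split $\int_{H_k}^{R_{\mathrm{max}_k}}\mathrm{d}r=\sum_{i=k}^{K}\int_{H_i}^{H_{i+1}}\mathrm{d}r$ with the convention $H_{K+1}=R_{\mathrm{max}_K}$, and then absorb the exponential factor already present inside $f_D(r,\lambda_k,H_k)$ into the product indexed by $j$, giving a single product $\prod_{j=1}^{i}\exp(\cdot)$ that includes the $j=k$ term. This collapses exactly to the claimed expression \eqref{assoeq}. The only subtle point, and therefore the step I would be most careful with, is the bookkeeping in the partitioning: ensuring the boundary conditions $H_{K+1}=R_{\mathrm{max}_K}$ and $i\ge k$ (so that $r\ge H_k$ is automatic and $r_{0,k}$ is actually within its support $[H_k,R_{\mathrm{max}_k})$), and verifying that the product over $j>i$ contributes a factor of $1$ and is therefore correctly omitted. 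Everything else is a routine application of independence and the CCDF derived from Lemma \ref{contactdistance}.
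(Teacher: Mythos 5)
Your proposal is correct and follows essentially the same route as the paper's Appendix B: independence of the tiers, the void-probability CCDF from Lemma \ref{contactdistance}, partitioning the range of $r$ into the intervals $[H_i,H_{i+1})$ so that only tiers with $H_j\le r$ contribute nontrivial factors, and absorbing the PDF's own exponential into the product. The one cosmetic discrepancy (your partition runs up to $H_{K+1}=R_{\mathrm{max}_K}$ rather than $R_{\mathrm{max}_k}$) is shared by the paper itself and is justified there by neglecting the event that no tier-$k$ satellite is visible.
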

	\begin{proof}
		See Appendix \ref{proofassociationprobability}.
	\end{proof}
	The results derived in Lemma \ref{associationprobability} provide valuable insights into the association probability of the DbA scheme and the interplay between various design parameters that affect the network performance. Moreover, it offers a more convenient and efficient way to evaluate the association probability than the Monte Carlo simulation, especially in the considered 3D large-scale multi-tier LEO satellite networks with a large number of network tiers. In specific, based on the analytical expression presented in Lemma \ref{associationprobability}, the association probability can be immediately determined for any given network parameters, including the number of tiers, the height of each LEO satellite tier, and the density of LEO satellites, while the exponential product term in \eqref{assoeq} shows the interdependence among LEO satellite tiers. Additionally, by substituting $\lambda_k$ with $\frac{N_k}{4\pi(R_{\oplus}+H_k)}$, we have, 
	$\mathcal{A}_{D,k}=\sum\nolimits_{i=k}^{K}\int\nolimits_{H_{i}}^{H_{i+1}}\frac{N_k r}{2R_{\oplus}(R_{\oplus}+H_k)}\prod\nolimits_{j=1}^{i}\exp\left(-\frac{\pi\lambda_j(R_{\oplus}+H_j)(r^2-H_j^2)}{R_{\oplus}}\right)\mathrm{d}r$, based on which we can easily observe that the association probability of each tier increases with the average number of LEO satellites and decreases with the altitude of this tier.
	\subsection{Power-based association (PbA) scheme}
	The second association scheme, namely PbA scheme, takes into consideration of the average received signal power at the gUEs. In specific, the PbA scheme associates each gUE with the LEO satellite that provides the highest average signal power. Note that LEO satellites which belong in the same tier have identical transmit power and antenna array gain, and thus the nearest one to the gUE provides the strongest average received signal. Hence, the set of candidate LEO satellites is identical with that formulated for the DbA scheme. In the second stage, the serving LEO satellite is selected among the set of candidate LEO satellites, which provides the highest average received signal power instead of shortest Euclidean distance.
	
	We now focus on the association probability for the PbA scheme. Let $\xi_{0,k}$ represent the average received signal power from the LEO satellite located at $x_{0,k}$. Note that $\xi_{0,k}$ is computed by taking the expectation over the channel power gain, i.e. $\xi_{0,k}=\bar{h} P_kG_{m,k}^L r_{0,k}^{-\alpha}G_m^U$, where $\bar{h}$ is the first-order moment of the channel power gain, i.e. $\bar{h}=(2b+\Omega)$ for the SR fading channel model \cite{Abdi}. In the following lemma, we provide the PDF of $\xi_{0,k}$, which is essential to assess the association probability for the PbA scheme.
	\begin{Lemma}\label{AveragePowerPdf}
		The PDF of the average received signal power at the typical gUE from its $k$-th candidate LEO satellite is given by
		\begin{align}\label{eqAveragePowerPdf}
			f_P(\xi,\lambda_k,H_k)=\frac{2\pi\lambda_k(R_{\oplus}+H_k)}{\xi\alpha R_{\oplus}}\left(\frac{\varrho_k}{\xi}\right)^{\frac{2}{\alpha}}\exp\left(-\frac{\pi\lambda_k(R_{\oplus}+H_k)}{R_{\oplus}}\left(\left(\frac{\varrho_k}{\xi}\right)^{\frac{2}{\alpha}}-H_k^2\right)\right),
		\end{align}
		where $\varrho_k=\bar{h}P_kG^L_{m,k}G_m^U$,  $\xi_{\mathrm{min}_k}<\xi\le\xi_{\mathrm{max}_k}$, ${\xi}_{\mathrm{min}_k}={\varrho_k}{R_{\mathrm{max}_k}^{-\alpha}}$ and ${\xi}_{\mathrm{max}_k}={\varrho_k}{H_k^{-\alpha}}$.
	\end{Lemma}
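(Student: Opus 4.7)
The plan is to obtain $f_P$ from Lemma \ref{contactdistance} via a change of variables, since $\xi_{0,k}$ is a deterministic monotone function of the candidate distance $r_{0,k}$. Specifically, by definition $\xi_{0,k}=\bar h P_k G_{m,k}^L r_{0,k}^{-\alpha} G_m^U=\varrho_k r_{0,k}^{-\alpha}$, which is strictly decreasing in $r_{0,k}$ on the support $[H_k,R_{\mathrm{max}_k})$. Hence the inverse map $r=(\varrho_k/\xi)^{1/\alpha}$ is well-defined and smooth, and the transformation formula for a single random variable applies directly.

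First I would invert the relation to write $r_{0,k}=(\varrho_k/\xi_{0,k})^{1/\alpha}$ and compute the derivative $\bigl|\mathrm{d}r/\mathrm{d}\xi\bigr|=\frac{1}{\alpha\xi}(\varrho_k/\xi)^{1/\alpha}$. Next I would substitute this $r$ into the PDF $f_D(r,\lambda_k,H_k)$ from Lemma \ref{contactdistance}, in particular replacing the linear factor $r$ by $(\varrho_k/\xi)^{1/\alpha}$ and the $r^2$ inside the exponent by $(\varrho_k/\xi)^{2/\alpha}$, and then multiply by the Jacobian factor. The two $(\varrho_k/\xi)^{1/\alpha}$ contributions (one from the factor $r$ in $f_D$ and one from $|\mathrm{d}r/\mathrm{d}\xi|$) combine into the single $(\varrho_k/\xi)^{2/\alpha}$ appearing in the claim, while the constants $2\pi\lambda_k(R_{\oplus}+H_k)/R_{\oplus}$ and the exponent in \eqref{pdfcontactdistance} carry over unchanged. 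A brief sanity check is that the untouched $H_k^2$ term in the exponent accounts for the lower end $r=H_k$ of the original support.

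Finally I would determine the support. Because $\xi=\varrho_k r^{-\alpha}$ is decreasing, the image of $[H_k,R_{\mathrm{max}_k})$ under this map is exactly $(\varrho_k R_{\mathrm{max}_k}^{-\alpha},\varrho_k H_k^{-\alpha}]=(\xi_{\mathrm{min}_k},\xi_{\mathrm{max}_k}]$, matching the stated range. This also justifies the absence of any indicator functions in the closed form. There is no real obstacle here beyond careful bookkeeping of the exponents $1/\alpha$ and $2/\alpha$ and the orientation of the interval after the decreasing change of variables.
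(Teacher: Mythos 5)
Your proposal is correct and is essentially the paper's own argument: the paper obtains $F_P(\xi,\lambda_k,H_k)=1-F_D\bigl((\varrho_k/\xi)^{1/\alpha},\lambda_k,H_k\bigr)$ from the monotone decreasing relation $\xi_{0,k}=\varrho_k r_{0,k}^{-\alpha}$ and then differentiates in $\xi$, which is exactly the change-of-variables computation you carry out directly at the PDF level. Your Jacobian factor, the resulting exponent $(\varrho_k/\xi)^{2/\alpha}$, and the support $(\xi_{\mathrm{min}_k},\xi_{\mathrm{max}_k}]$ all check out.
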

	\begin{proof}
		See Appendix \ref{ProofAveragePowerPdf}.
	\end{proof}
	It is worth mentioning that, while the distance distribution was studied to analyze the performance of the DbA scheme, the power distribution presented in Lemma \ref{AveragePowerPdf} offers a more appropriate method for assessing the performance achieved by the PbA scheme. In particular, the average received signal power from the candidate LEO satellites exhibits identical lower and upper bounds. The minimum value of the average received signal power from any candidate LEO satellite approaches zero due to the extremely long communication distance, i.e., $\xi_{\mathrm{min}_k}\approx 0$, $\forall ~ 1\le k\le K$; whereas the maximum average received signal power from all candidate LEO satellites is also identical, i.e., $\xi_{\mathrm{max}_k}=\xi_{\mathrm{max}_j}$, $\forall ~ 1\le j,k\le K$, owing to the power adjusting mechanism introduced in Section \ref{PAM}. In following lemma, we provide the association probability for the PbA scheme.
	\begin{Lemma}\label{AssociationProbabilityPbA}
		The probability of the typical gUE to communicate with a LEO satellite from the $k$-th tier, for the case where PbA scheme is adopted, is given by
		\begin{equation}\label{assoeq2}
			\hspace{-1.1mm}\mathcal{A}_{P,k}=\frac{2\pi\lambda_k\varrho_k^{\frac{2}{\alpha}}(R_{\oplus}+H_k)}{\alpha R_{\oplus}}\int\nolimits_{{\xi}_{\mathrm{min}_k}}^{{\xi}_{\mathrm{max}_k}}\hspace{-1mm}\left(\frac{1}{\xi}\right)^{\frac{2+\alpha}{\alpha}}\prod_{j=1}^{K}\exp\left(-\frac{\pi\lambda_j(R_{\oplus}+H_j)}{R_{\oplus}}\left(\hspace{-2mm}\left(\frac{\varrho_{k}}{\xi}\right)^{\frac{2}{\alpha}}\hspace{-2mm}-H_j^2\right)\hspace{-2mm}\right)\mathrm{d}\xi.
		\end{equation}
	\end{Lemma}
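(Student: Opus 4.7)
The plan is to read off $\mathcal{A}_{P,k}$ as the probability that the tier-$k$ candidate satellite dominates all other tier candidates in average received power, then condition on the value of $\xi_{0,k}$ and exploit the mutual independence of the candidates across tiers. By the definition of the PbA scheme, the typical gUE is served by a tier-$k$ satellite iff $\xi_{0,k}=\max_{j}\xi_{0,j}$. Since $\Phi_1,\ldots,\Phi_K$ are independent by construction, so are the quantities $\xi_{0,1},\ldots,\xi_{0,K}$, and hence
\begin{equation*}
\mathcal{A}_{P,k}=\int_{\xi_{\mathrm{min}_k}}^{\xi_{\mathrm{max}_k}} f_P(\xi,\lambda_k,H_k)\prod_{j\neq k}\Pr(\xi_{0,j}<\xi)\,\mathrm{d}\xi,
\end{equation*}
where $f_P$ is supplied by Lemma \ref{AveragePowerPdf}.

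The next step is to evaluate each $\Pr(\xi_{0,j}<\xi)$ by pushing the inequality back to the distance variable. Since $\xi_{0,j}=\varrho_j r_{0,j}^{-\alpha}$ is strictly decreasing in $r_{0,j}$, the event $\{\xi_{0,j}<\xi\}$ coincides with $\{r_{0,j}>(\varrho_j/\xi)^{1/\alpha}\}$, whose probability is the complementary CDF obtained by integrating \eqref{pdfcontactdistance} from Lemma \ref{contactdistance}, giving $\exp\!\left(-\pi\lambda_j(R_{\oplus}+H_j)\bigl((\varrho_j/\xi)^{2/\alpha}-H_j^2\bigr)/R_{\oplus}\right)$. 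I would then invoke the power adjusting mechanism of Section \ref{PAM}, which gives $\varrho_j/H_j^{\alpha}=\varrho_k/H_k^{\alpha}$ and therefore rewrites each $(\varrho_j/\xi)^{2/\alpha}$ in a form involving the common quantity $(\varrho_k/\xi)^{2/\alpha}$; this is precisely what allows the exponents in the product to be expressed through $\varrho_k$ rather than through tier-dependent $\varrho_j$. Finally, substituting $f_P(\xi,\lambda_k,H_k)$ from Lemma \ref{AveragePowerPdf}, its non-exponential prefactor collapses to $(2\pi\lambda_k\varrho_k^{2/\alpha}(R_\oplus+H_k)/(\alpha R_\oplus))\,(1/\xi)^{(2+\alpha)/\alpha}$ and is pulled out of the integral, while its exponential factor contributes exactly the missing $j=k$ term of the product, allowing the product to be written cleanly over $j=1,\ldots,K$.

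The principal technical obstacle is bookkeeping on the domains of the complementary CDF. The formula from Lemma \ref{contactdistance} is valid only when the threshold $(\varrho_j/\xi)^{1/\alpha}$ lies in $[H_j,R_{\mathrm{max}_j}]$; the common upper endpoint $\xi_{\mathrm{max}_k}=\xi_{\mathrm{max}_j}$ enforced by the PAM takes care of the upper end (no threshold can fall below $H_j$), while the low-$\xi$ tail is handled by the approximation $P^{\circ}(N_k,H_k)\to 1$ justified immediately after Lemma \ref{contactdistance}, which allows us to ignore the negligible event that some tier contributes no visible satellite. A secondary detail is verifying that the merging of the exponential in $f_P$ with the $j\neq k$ factors of the product reproduces the stated exponent $(\varrho_k/\xi)^{2/\alpha}-H_j^2$ for every index, which follows from the PAM substitution described above and a short rearrangement.
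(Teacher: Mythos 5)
Your argument is correct and follows essentially the same route as the paper's proof: condition on the value of $\xi_{0,k}$, use the independence of the $K$ PPPs to factor $\mathbb{P}\{\xi_{0,k}>\max_{j\neq k}\xi_{0,j}\}$ into per-tier CDFs of $\xi_{0,j}$ (each a void probability of $\Phi_j$ obtained by mapping back to the distance variable), and integrate against $f_P(\xi,\lambda_k,H_k)$ so that its exponential supplies the $j=k$ factor of the product. Your added remarks on the domain bookkeeping and on the power-adjusting mechanism being what converts the tier-dependent $\varrho_j$ into the common $\varrho_k$ appearing in \eqref{assoeq2} are points the paper's appendix leaves implicit rather than departures from it.
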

	\begin{proof}
		See Appendix \ref{ProofAssociationProbabilityPbA}.
	\end{proof}
	From the expression derived in above lemma, we can observe that the PbA scheme considers one additional parameter compared to the DbA scheme, i.e. the received signal power; whereas the analytical expression for the association probability of the PbA scheme has a more concise form than DbA, due to the employment of the power adjusting mechanism.
	\vspace{-5mm}
	\subsection{Random selection-based association (RbA) scheme}
	The RbA scheme is based on a random tier selection. In specific, the first stage process of the RbA scheme is the same with the DbA and PbA schemes. In the second stage, the typical gUE randomly selects one serving LEO satellite among the set of candidate LEO satellites. The RbA scheme does not require any operations for comparing distance and/or received power in the second stage process,  leading to low implementation/signaling complexity. On one hand, the RbA scheme can be adopted in scenarios with a large number of network tiers, where its simplicity offers significant advantages. On the other hand, for scenarios with a small number of network tiers, the RbA scheme exhibits almost the same complexity as the DbA and the PbA schemes. Furthermore, as the RbA enables a gUE to randomly select one LEO satellite among $K$ candidate LEO satellites, the probability of a gUE being associated with each tier is equal, i.e. $
	\mathcal{A}_{R,k}=\mathcal{A}_{R,j}=\frac{1}{K},~ \forall\ {} 1\le k,j\le K$.
	\section{Downlink Performance of LEO satellite networks}\label{downper}
	We investigate the downlink performance of the considered heterogeneous LEO satellite networks in the context of proposed association schemes. Specifically, the downlink performance is initially evaluated in terms of the coverage probability, corresponding to the probability that the instantaneous SINR exceeds a predefined threshold.   Additionally, we discuss the performance of the proposed association schemes under another two distinct extreme scenarios, namely, Rayleigh fading and non-fading environments. Finally,  we investigate the spectral efficiency achieved at typical gUEs with the employment of each association scheme.
	\subsection{Coverage performance}
	The analytical expressions for the coverage probability achieved by each association scheme with SR, Rayleigh, and non-fading channels are presented in the following subsections.
	\subsubsection{Coverage probability under the DbA scheme}
	In order to facilitate the analysis of the coverage probability achieved by the DbA scheme, we evaluate the distribution of the contact distance, i.e. the distance between the typical gUE and its serving LEO satellite. We present the conditional  PDF of the contact distance for the DbA scheme in the following lemma, by considering the case where the typical gUE is associated with the $k$-th tier network.
	\begin{Lemma}\label{conditionaldistancepdf}
		When the DbA scheme is employed, the PDF of the contact distance is given by
		\begin{align}
			\hspace{-3mm}	f_{D,k}(r|\mathcal{A}_{D,k})=&\frac{2\pi\lambda_k(R_{\oplus}+H_k)}{R_{\oplus}\mathcal{A}_{D,k}}\sum_{i=k}^{K}\mathbbm{I}(H_i,H_{i+1})\exp\left(-\sum_{j=1}^{i}\frac{\pi\lambda_j(R_{\oplus}+H_j)(r^2-H_j^2)}{R_{\oplus}}\right)r,
		\end{align}
		where $\mathbb{I}(H_i,H_{i+1})$ is the indicator function, given by 
		\begin{equation}
			\mathbb{I}(H_i,H_{i+1})=\begin{cases}
				1,& \text{ $H_i\le r<H_{i+1}$},\\
				0,&\text{\rm otherwise}.
			\end{cases}
		\end{equation}
	\end{Lemma}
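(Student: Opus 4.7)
The plan is to derive the conditional PDF by first writing the joint "density-event" for the pair (distance to the serving satellite equals $r$, association with tier $k$), and then dividing by the association probability $\mathcal{A}_{D,k}$ from Lemma \ref{associationprobability}. By the DbA rule, the event $\mathcal{A}_{D,k}$ at contact distance $r$ is equivalent to $r_{0,k}=r$ together with $r_{0,j}>r$ for all $j\neq k$. Since the tier processes $\Phi_j$ are mutually independent, this decomposes as
\begin{align*}
f_{D,k}(r,\mathcal{A}_{D,k}) \;=\; f_D(r,\lambda_k,H_k)\,\prod_{j\neq k}\mathbb{P}\!\left(r_{0,j}>r\right).
\end{align*}
Lemma \ref{contactdistance} supplies $f_D(r,\lambda_k,H_k)$ directly, and integrating it from $r$ to $R_{\mathrm{max}_j}$ yields the complementary CDF $\mathbb{P}(r_{0,j}>r)=\exp\!\bigl(-\pi\lambda_j(R_\oplus+H_j)(r^2-H_j^2)/R_\oplus\bigr)$ whenever $r\ge H_j$, and $\mathbb{P}(r_{0,j}>r)=1$ whenever $r<H_j$ (no satellite of tier $j$ can be closer than its altitude).

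Next I would absorb the exponential factor of $f_D(r,\lambda_k,H_k)$ into the product, which, together with the complementary CDFs, gives
\begin{align*}
\prod_{j:\,H_j\le r}\exp\!\left(-\frac{\pi\lambda_j(R_\oplus+H_j)(r^2-H_j^2)}{R_\oplus}\right),
\end{align*}
so that the summation index in the exponent runs precisely over those tiers whose altitude does not exceed $r$. The main piece of bookkeeping is to identify this set explicitly for each range of $r$: because $H_1<H_2<\cdots<H_K$ and, conditional on $\mathcal{A}_{D,k}$, the contact distance satisfies $r\ge H_k$, I would partition the admissible range $[H_k,R_{\mathrm{max}_k})$ into the sub-intervals $[H_i,H_{i+1})$ for $i=k,\ldots,K$ (with the convention $H_{K+1}=R_{\mathrm{max}_K}$). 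On each such sub-interval $\{j:H_j\le r\}=\{1,2,\ldots,i\}$, so the exponent collapses to a sum from $j=1$ to $j=i$.

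Finally, I would encode this partition by the indicator $\mathbb{I}(H_i,H_{i+1})$ and take the sum over $i$, obtaining
\begin{align*}
f_{D,k}(r,\mathcal{A}_{D,k}) = \frac{2\pi\lambda_k(R_\oplus+H_k)\,r}{R_\oplus}\sum_{i=k}^{K}\mathbb{I}(H_i,H_{i+1})\exp\!\left(-\sum_{j=1}^{i}\frac{\pi\lambda_j(R_\oplus+H_j)(r^2-H_j^2)}{R_\oplus}\right),
\end{align*}
and dividing by $\mathcal{A}_{D,k}$ yields the claimed conditional density. A routine consistency check is that $\int f_{D,k}(r|\mathcal{A}_{D,k})\,\mathrm{d}r=1$, which follows immediately because the numerator integrated over $r$ recovers exactly the expression for $\mathcal{A}_{D,k}$ in equation \eqref{assoeq}.

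The principal obstacle is the piecewise structure: one must verify that the set of "competing" tiers whose complementary CDF contributes a non-trivial exponential factor changes precisely at the altitudes $H_i$, and that the seemingly "self" term $j=k$ appearing in the product is correctly accounted for by folding the exponential prefactor of $f_D(r,\lambda_k,H_k)$ into the product over $\{j:H_j\le r\}$. Beyond this combinatorial care, the remaining steps are mechanical applications of Lemma \ref{contactdistance} and the independence of the $\Phi_j$.
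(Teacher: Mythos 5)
Your proposal is correct and follows essentially the same route as the paper's Appendix E: the paper computes the conditional complementary CDF $\mathbb{P}\{r<r_{0,k}<\min_{j\neq k}r_{0,j}\}/\mathcal{A}_{D,k}$ as an integral partitioned over $[H_i,H_{i+1})$ and then differentiates, and the integrand there is exactly the joint density $f_D(r,\lambda_k,H_k)\prod_{j\neq k}\mathbb{P}(r_{0,j}>r)$ that you write down directly. Your handling of the piecewise product (trivial survival factors for tiers with $H_j>r$, folding the self-term's exponential into the product) and the normalization check against \eqref{assoeq} match the paper's bookkeeping.
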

	\begin{proof}
		See Appendix \ref{proofconditionaldistancepdf}.
	\end{proof}
	It is worth noting that the results presented in Lemma \ref{conditionaldistancepdf} are applicable to various LEO satellite network topologies. For instance, by setting $H_k=H$, $G_{m,k}^L=G_m^L$, $P_k=P$ and $\lambda_k=\lambda$ $\forall~1\le k\le K$, we capture the single-tier LEO satellite networks as investigated in previous studies \cite{Akram1, Akram2, Jung}.
	
	By aiming to derive the closed-form expression for the coverage probability, we approximate the power of the aggregate interference signals with its mean, and the coverage probability achieved by the DbA scheme is evaluated in the following theorem.
	\begin{Theorem}\label{coverageprobability}
		The coverage probability achieved by the DbA scheme is given by
		\begin{equation}
			\mathcal{P}_D(\beta)=\sum\limits_{k=1}^{K}\mathcal{A}_{D,k}\mathcal{P}_{D,k}(\beta),
		\end{equation}
		where
		\begin{align}
			&\mathcal{P}_{D,k}(\beta)\approx\int\nolimits_{H_k}^{R_{\mathrm{max}_k}}\left(1-F_h\left(\frac{\beta r^{\alpha}\big({\widetilde{\mathcal{I}}_D}(r)+\sigma^2\big)}{ P_{k}G^L_{m,k}G^U_m}\right)\right)f_{D,k}(r|\mathcal{A}_{D,k})\mathrm{d}r,
		\end{align}
	and
		\begin{align}
			{\widetilde{\mathcal{I}}_D}(r)=\sum\limits_{j=1}^{K}\frac{2\pi\lambda_j\bar{h}P_jG_{s,j}^LG_s^U(R_{\oplus}+H_j)(\max\{\min\{r^{2-\alpha},H_j^{2-\alpha}\}-R_{\mathrm{max}_j}^{2-\alpha},0\})}{(\alpha-2)R_{\oplus}}.
		\end{align}
	\end{Theorem}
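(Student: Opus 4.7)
The plan is to peel off the proof in two stages, first conditioning on the associated tier, then on the serving distance, and finally reducing the interference to its conditional mean. By the law of total probability, $\mathcal{P}_D(\beta)=\sum_{k=1}^{K}\mathcal{A}_{D,k}\,\mathcal{P}_{D,k}(\beta)$ where $\mathcal{P}_{D,k}(\beta)$ is the coverage probability conditioned on the gUE being served by the $k$-th tier. Inside this conditional event, I would condition further on the contact distance $r=\|x_{0,k}\|$, whose conditional PDF $f_{D,k}(r\mid\mathcal{A}_{D,k})$ is supplied by Lemma~\ref{conditionaldistancepdf}. Writing out the SINR and isolating the fading coefficient $h$ of the intended link,
\begin{align*}
\Pr\!\left[\mathrm{SINR}>\beta\,\big|\,r,k\right]=\Pr\!\left[h>\tfrac{\beta r^{\alpha}(\mathcal{I}+\sigma^{2})}{P_{k}G^{L}_{m,k}G^{U}_{m}}\right]=1-F_h\!\left(\tfrac{\beta r^{\alpha}(\mathcal{I}+\sigma^{2})}{P_{k}G^{L}_{m,k}G^{U}_{m}}\right),
\end{align*}
which, after replacing the random aggregate interference $\mathcal{I}$ by its conditional mean $\widetilde{\mathcal{I}}_D(r)$ (the stated approximation) and integrating over $r\in[H_k,R_{\mathrm{max}_k}]$, yields exactly the expression claimed for $\mathcal{P}_{D,k}(\beta)$.

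The substantive piece is therefore the derivation of $\widetilde{\mathcal{I}}_D(r)$. Given association with tier $k$ at distance $r$, by the void property of the independent PPPs $\Phi_j$, the interfering satellites of tier $j$ form a PPP on the spherical cap restricted to distances in $[\max\{r,H_j\},R_{\mathrm{max}_j}]$: for $j=k$ all other tier-$k$ satellites lie at distances greater than $r$, while for $j\neq k$ the DbA rule forces the nearest tier-$j$ satellite to be farther than $r$, and of course no tier-$j$ point can be closer than $H_j$ or farther than $R_{\mathrm{max}_j}$. Applying Campbell's theorem to this restricted PPP with the radial intensity derived in the proof of Lemma~\ref{contactdistance}, namely $\tfrac{2\pi\lambda_j(R_{\oplus}+H_j)}{R_{\oplus}}x\,\mathrm{d}x$, I get
\begin{align*}
\mathbb{E}[\mathcal{I}_j\mid r]=\tfrac{2\pi\lambda_j\bar{h}P_jG^{L}_{s,j}G^{U}_{s}(R_{\oplus}+H_j)}{R_{\oplus}}\int_{\max\{r,H_j\}}^{R_{\mathrm{max}_j}}x^{1-\alpha}\,\mathrm{d}x,
\end{align*}
which evaluates to the $j$-th summand of $\widetilde{\mathcal{I}}_D(r)$ once one notices that, for $\alpha>2$, $\max\{r,H_j\}^{2-\alpha}=\min\{r^{2-\alpha},H_j^{2-\alpha}\}$. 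The outer $\max\{\cdot,0\}$ absorbs the degenerate case where the integration interval is empty (so the tier contributes no interference), and summing over $j$ gives $\widetilde{\mathcal{I}}_D(r)$.

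The main technical obstacle is justifying the conditional PPP structure of the interferers: one must argue that, given the DbA association event and the value of $r$, each $\Phi_j$ remains Poisson on the annular region $[\max\{r,H_j\},R_{\mathrm{max}_j}]$ with its original intensity. This follows from independence of the $\Phi_j$'s together with the fact that the association event for $j\neq k$ is equivalent to a void event on $[H_j,r]$, which leaves the restriction of $\Phi_j$ to $[r,R_{\mathrm{max}_j}]$ untouched; for $j=k$, Slivnyak's theorem removes the atom at $x_{0,k}$ while preserving the PPP law of the remainder. Once this reduction is in place, the remaining steps are routine substitutions, so the write-up should emphasize this independence argument and the piecewise behaviour of the lower integration limit, which together produce the $\min$ and $\max$ operators appearing in $\widetilde{\mathcal{I}}_D(r)$.
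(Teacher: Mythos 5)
Your proposal is correct and follows essentially the same route as the paper: the same total-probability decomposition over tiers, conditioning on the contact distance with the PDF from Lemma~\ref{conditionaldistancepdf}, isolating the intended-link fading gain, replacing the aggregate interference by its conditional mean, and evaluating that mean via Campbell's theorem over the annular region $[\max\{r,H_j\},R_{\mathrm{max}_j}]$ (the paper does this in spherical coordinates and then changes variables to $r$, which yields exactly your radial intensity $\tfrac{2\pi\lambda_j(R_{\oplus}+H_j)}{R_{\oplus}}x\,\mathrm{d}x$). Your explicit justification of the conditional Poisson structure of the interferers (void event for $j\neq k$, Slivnyak for $j=k$) is a point the paper leaves implicit, but it does not change the argument.
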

	\begin{proof}
		See Appendix \ref{proofcoverageprobability}.
	\end{proof}
	
	Note that the term $\widetilde{\mathcal{I}}_D(r)$ represents the conditional mean of the aggregate interference, which is a function of random variable $r$. Such methodology takes into account the randomness of spatial deployments and fading channels, and avoids the intractability that typically arises when performing system-level analysis with the SR fading channel model. The accuracy of the adopted approximation is validated by our numerical and simulation results in Section \ref{numsim}. Given that $F_h(\cdot)$ is a monotonically increasing function, we can infer, based on Theorem \ref{coverageprobability}, that the shorter contact distance, i.e. $r$, the higher transmit power, i.e. $P_k$, the smaller interference i.e. $\widetilde{\mathcal{I}}_D(r)$, lead to a higher converge probability. Moreover,  the closed-form expression of $\widetilde{\mathcal{I}}_D(r)$ provides a convenient approach for analyzing the average interference levels in the large-scale LEO satellite networks. Specifically, we can easily observe that the average interference received by the typical gUE is directly proportional to the density and transmit power of LEO satellites, while a greater path-loss exponent mitigates the interference power.  
	\subsubsection{Coverage probability under the PbA scheme}
	We evaluate the distribution of the average received signal power at the typical gUE from its serving LEO satellite. By considering the scenario where the typical gUE is associated with the $k$-th tier, the conditional PDF of $\xi_{0,k}$ is evaluated in the following lemma.
	\begin{Lemma}\label{ConditionAveragePowerPdf}
		When the PbA scheme is employed, the PDF of the average received signal power at the typical gUE from its serving LEO satellite is given by
		\begin{align}\label{eqConditionAveragePowerPdf}
			\hspace{-3mm}f_{P,k}(\xi|\mathcal{A}_{P,k})=\frac{2\pi\lambda_k\varrho_k^{\frac{2}{\alpha}}(R_{\oplus}+H_k)}{\alpha R_{\oplus}\mathcal{A}_{P,k}}\left(\frac{1}{\xi}\right)^{\frac{2}{\alpha}+1}\prod_{j=1}^{K}\exp\left(-\frac{\pi\lambda_j(R_{\oplus}+H_j)}{R_{\oplus}}\left(\hspace{-2mm}\left(\frac{\varrho_k}{\xi}\right)^{\frac{2}{\alpha}}-H_j^2\right)\hspace{-2mm}\right).
		\end{align}
	\end{Lemma}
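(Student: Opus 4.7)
The plan is to derive the conditional density by writing it as a ratio of a joint density to the association probability, and to build the joint density from the known marginal distribution of the candidate received powers together with the event that defines association under the PbA rule. By Bayes' rule,
$$f_{P,k}(\xi\mid\mathcal{A}_{P,k}) = \frac{f_{P,k}(\xi,\mathcal{A}_{P,k})}{\mathcal{A}_{P,k}},$$
so it suffices to identify the joint density in the numerator; the denominator is known from Lemma \ref{AssociationProbabilityPbA}. Under PbA, association with tier $k$ at power level $\xi$ occurs precisely when the candidate satellite from tier $k$ delivers average received power $\xi$ and every other tier's candidate delivers a smaller one. Since $\Phi_1,\ldots,\Phi_K$ are independent (and Slivnyak's theorem preserves this independence under conditioning on the typical gUE location), the $\xi_{0,j}$ are mutually independent, so
$$f_{P,k}(\xi,\mathcal{A}_{P,k}) = f_P(\xi,\lambda_k,H_k)\prod_{j\neq k}\Pr\bigl(\xi_{0,j}\le \xi\bigr),$$
where $f_P(\xi,\lambda_k,H_k)$ is the marginal PDF supplied by Lemma \ref{AveragePowerPdf}.

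Next I would compute each factor $\Pr(\xi_{0,j}\le\xi)$ by translating the event back to a distance event. Because $\xi_{0,j}=\varrho_j r_{0,j}^{-\alpha}$ is monotonically decreasing in $r_{0,j}$, we have $\{\xi_{0,j}\le\xi\}=\{r_{0,j}\ge(\varrho_j/\xi)^{1/\alpha}\}$, and integrating the contact-distance PDF from Lemma \ref{contactdistance} from $(\varrho_j/\xi)^{1/\alpha}$ up to $R_{\mathrm{max}_j}$ gives a single exponential term with argument $-\tfrac{\pi\lambda_j(R_\oplus+H_j)}{R_\oplus}\bigl((\varrho_j/\xi)^{2/\alpha}-H_j^2\bigr)$. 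Multiplying these $K-1$ exponentials together with the exponential factor already appearing inside $f_P(\xi,\lambda_k,H_k)$ yields a product over the full index set $j=1,\ldots,K$, exactly as in the statement. The power-adjusting mechanism introduced in Section~\ref{PAM}, namely $\varrho_j/H_j^{\alpha}=\varrho_k/H_k^{\alpha}$ for all $j,k$, is then used to rewrite the arguments of these exponentials consistently in terms of $\varrho_k$, which is convenient because the integration variable $\xi$ is the received power from tier $k$.

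Putting the pieces together, $f_P(\xi,\lambda_k,H_k)$ contributes the prefactor $\tfrac{2\pi\lambda_k\varrho_k^{2/\alpha}(R_\oplus+H_k)}{\alpha R_\oplus}\,\xi^{-(2+\alpha)/\alpha}$, the $K$ exponentials combine into the displayed product, and division by $\mathcal{A}_{P,k}$ completes the derivation. The main obstacle I anticipate is the bookkeeping at the step where the joint density is assembled: one has to be careful that the factor from Lemma \ref{AveragePowerPdf} is not double-counted with the $j=k$ term of the product, and that the index-dependent constants $\varrho_j$ are converted correctly via the power-adjusting relation so that a single $\varrho_k$ governs the entire product. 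A secondary subtlety is justifying that the integration limits on the distance-to-power change of variables remain valid on the full admissible range of $\xi$, which follows from the identical upper bound $\xi_{\mathrm{max}_k}=\xi_{\mathrm{max}_j}$ guaranteed by the power-adjusting mechanism and from $\xi_{\mathrm{min}_k}\approx 0$ as noted after Lemma \ref{AveragePowerPdf}.
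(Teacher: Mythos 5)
Your proposal is correct and follows essentially the same route as the paper: the paper forms the conditional CDF $F_P(\xi\mid\mathcal{A}_{P,k})$ as the ratio of $\mathbb{P}\{\max_{j\neq k}\xi_{0,j}<\xi_{0,k}\le\xi\}$ to $\mathcal{A}_{P,k}$, expands the numerator via the mutual independence of the $\xi_{0,j}$ into $\int_{\xi_{\mathrm{min}_k}}^{\xi}\prod_{j\neq k}\mathbb{P}\{\xi_{0,j}<u\}\,f_P(u,\lambda_k,H_k)\,\mathrm{d}u$, and differentiates, which is exactly the joint-density-over-association-probability decomposition you write down directly. The only cosmetic difference is that you skip the intermediate CDF and assemble the density immediately; your bookkeeping of the $j=k$ factor and the use of the power-adjusting relation to express everything through $\varrho_k$ mirror what the paper does.
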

	\begin{proof}
		See Appendix \ref{ProofConditionAveragePowerPdf}.
	\end{proof}   
	By utilizing the aforementioned results, we provide the analytical expression for the coverage probability achieved by the PbA scheme, in  the following theorem.
	\begin{Theorem}\label{coverageprobability2}
		The coverage probability achieved by the PbA scheme is given by
		\begin{equation}
			\mathcal{P}_P(\beta)=\sum\limits_{k=1}^{K}\mathcal{A}_{P,k}\mathcal{P}_{P,k}(\beta),
		\end{equation}
		where
		\begin{align}
			&\mathcal{P}_{P,k}(\beta)\approx\int\nolimits_{\xi_{\mathrm{min}_k}}^{\xi_{\mathrm{max}_k}}\left(1-F_h\left( \frac{\beta\bar{h}\big(\widetilde{\mathcal{I}}_{P}(\xi_{0,k})+\sigma^2\big)}{\xi_{0,k}}\right)\right)f_{P,k}(\xi_{0,k}|\mathcal{A}_{P,k})\mathrm{d}\xi_{0,k},
		\end{align}
		and
		\begin{align}
			\widetilde{\mathcal{I}}_P(\xi_{0,k})=\sum\limits_{j=1}^{K}\frac{2\pi\lambda_jP_jG_{s,j}^LG_s^U\bar{h}(R_{\oplus}+H_j)\big(\min\big\{(\frac{\varrho_k}{\xi_{0,k}})^{\frac{2-\alpha}{\alpha}},H_j^{2-\alpha}\big\}-R_{\mathrm{max}_j}^{2-\alpha}\big)}{(\alpha-2)R_{\oplus}}.
		\end{align}
	\end{Theorem}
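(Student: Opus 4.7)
The plan is to mirror the structure used for Theorem~\ref{coverageprobability} but to use the average received power $\xi_{0,k}$ as the conditioning quantity, since this is the sufficient statistic under the PbA rule.

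First I would invoke the law of total probability with respect to the tier of the serving LEO satellite. Since the events \emph{the typical gUE is associated with a tier-$k$ satellite}, $1\le k\le K$, are disjoint with probabilities $\mathcal{A}_{P,k}$ and exhaust the sample space (using $P^{\circ}(N_k,H_k)\to 1$ discussed after Lemma~\ref{contactdistance}), this immediately yields $\mathcal{P}_P(\beta)=\sum_{k=1}^{K}\mathcal{A}_{P,k}\mathcal{P}_{P,k}(\beta)$ and reduces the problem to evaluating each $\mathcal{P}_{P,k}(\beta)$ in isolation.

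Next, for fixed $k$, I would write the conditional SINR given $\xi_{0,k}$. The instantaneous received power from $x_{0,k}$ is $h_{0,k}\,P_k G^L_{m,k}G^U_m r_{0,k}^{-\alpha}=(h_{0,k}/\bar h)\,\xi_{0,k}$, so
\begin{align*}
\Pr(\mathrm{SINR}>\beta\mid\xi_{0,k},\mathcal I)
=\Pr\!\left(h_{0,k}>\tfrac{\beta\bar h(\mathcal I+\sigma^2)}{\xi_{0,k}}\,\Big|\,\xi_{0,k},\mathcal I\right)
=1-F_h\!\left(\tfrac{\beta\bar h(\mathcal I+\sigma^2)}{\xi_{0,k}}\right),
\end{align*}
with $F_h$ from \eqref{channelcdf}. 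Following the same mean-interference approximation as in Theorem~\ref{coverageprobability}, I would replace the random aggregate interference $\mathcal I$ by its conditional mean $\widetilde{\mathcal I}_P(\xi_{0,k})=\mathbb E[\mathcal I\mid\xi_{0,k}]$, which sidesteps the intractability of taking the Laplace transform of $\mathcal I$ under shadowed-Rician fading.

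The core computation is $\widetilde{\mathcal I}_P(\xi_{0,k})$. I would decompose $\mathcal I=\sum_{j=1}^K\mathcal I_j$ by tier and apply Campbell's theorem to each $\Phi_j$, using the distance density $\tfrac{2\pi\lambda_j(R_{\oplus}+H_j)s}{R_{\oplus}}$ on $[H_j,R_{\mathrm{max}_j}]$ obtained exactly as in the proof of Lemma~\ref{contactdistance}. The exclusion distance $d_j(\xi_{0,k})$ for tier-$j$ interferers follows from the PbA condition $\xi_{0,j}\le\xi_{0,k}$: combining this with the horizon constraint and the power-adjusting identity $\varrho_j/H_j^{\alpha}=\varrho_k/H_k^{\alpha}$ of Section~\ref{PAM} gives $d_j^{\,2-\alpha}=\min\{(\varrho_k/\xi_{0,k})^{(2-\alpha)/\alpha},H_j^{2-\alpha}\}$. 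Then
\begin{align*}
\mathbb E[\mathcal I_j\mid\xi_{0,k}]
=\int_{d_j}^{R_{\mathrm{max}_j}}\!\!\bar h P_j G^L_{s,j}G^U_s\,s^{-\alpha}\cdot\tfrac{2\pi\lambda_j(R_{\oplus}+H_j)s}{R_{\oplus}}\,\mathrm ds,
\end{align*}
and the elementary integral of $s^{1-\alpha}$ produces exactly the summand in the stated $\widetilde{\mathcal I}_P(\xi_{0,k})$. Finally, I would uncondition on $\xi_{0,k}$ using the conditional density $f_{P,k}(\xi_{0,k}\mid\mathcal A_{P,k})$ from Lemma~\ref{ConditionAveragePowerPdf}, integrating over $[\xi_{\mathrm{min}_k},\xi_{\mathrm{max}_k}]$, which yields the claimed expression for $\mathcal{P}_{P,k}(\beta)$.

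The main obstacle is Step~3: correctly identifying the tier-$j$ exclusion distance induced by PbA and translating it into the $\min\{\cdot,\cdot\}$ form appearing in the theorem. Unlike the DbA case, here the exclusion radius for $j\ne k$ is not simply $\max(H_j,r_{0,k})$ but is governed by the equal-power locus of the PbA decision rule, and the clean form in the statement only emerges once the power-adjusting relation is invoked to eliminate $\varrho_j$ in favor of $\varrho_k$. The remaining steps (law of total probability, mean-interference approximation, Campbell integral, unconditioning) are then largely mechanical.
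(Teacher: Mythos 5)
Your overall skeleton --- the law of total probability over the serving tier, conditioning on $\xi_{0,k}$, the mean-interference substitution $\mathcal{I}\mapsto\widetilde{\mathcal{I}}_P(\xi_{0,k})$, Campbell's theorem applied per tier after converting the spherical intensity into a distance density on $[H_j,R_{\mathrm{max}_j}]$, and unconditioning with $f_{P,k}(\cdot|\mathcal{A}_{P,k})$ --- is exactly the paper's proof in Appendix H. The gap sits precisely in the step you flag as the crux. The PbA condition for tier $j\neq k$ reads $\xi_{0,j}=\varrho_j r_{0,j}^{-\alpha}\le\xi_{0,k}$, which gives the exclusion radius $r_{0,j}\ge(\varrho_j/\xi_{0,k})^{1/\alpha}$ with $\varrho_j$, not $\varrho_k$, in the numerator. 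Invoking the power-adjusting identity $\varrho_j=\varrho_k(H_j/H_k)^{\alpha}$ turns this into $r_{0,j}\ge(H_j/H_k)(\varrho_k/\xi_{0,k})^{1/\alpha}=(H_j/H_k)\,r_{0,k}$, which always dominates the horizon bound $H_j$ because $r_{0,k}\ge H_k$. So your derivation, carried out faithfully, yields a tier-dependent exclusion radius $(H_j/H_k)r_{0,k}$ in which the $\min\{\cdot,H_j^{2-\alpha}\}$ is never active, and a summand containing $\varrho_j$ rather than $\varrho_k$; it does not reproduce the expression in the theorem.

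What the paper actually does is cruder and makes no use of the equal-power locus: for every tier it excludes the same region as in the DbA proof, namely the ball of radius $r_{0,k}=(\varrho_k/\xi_{0,k})^{1/\alpha}$ around the gUE intersected with each tier's visible shell, giving the lower integration limit $\max\{(\varrho_k/\xi_{0,k})^{1/\alpha},H_j\}$; integrating $r^{1-\alpha}$ over $[\max\{(\varrho_k/\xi_{0,k})^{1/\alpha},H_j\},R_{\mathrm{max}_j}]$ then produces exactly the $\min\{(\varrho_k/\xi_{0,k})^{(2-\alpha)/\alpha},H_j^{2-\alpha}\}-R_{\mathrm{max}_j}^{2-\alpha}$ of the statement. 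To match the theorem you must either adopt this DbA-style exclusion and drop the appeal to the PbA ordering, or accept that the equal-power exclusion gives a genuinely different (arguably more faithful, but not the stated) $\widetilde{\mathcal{I}}_P$. Everything else in your proposal --- the identity $h_{0,k}P_kG^L_{m,k}G^U_m r_{0,k}^{-\alpha}=(h_{0,k}/\bar h)\xi_{0,k}$, the $1-F_h(\cdot)$ form, the Campbell integral, and the final unconditioning --- is correct and coincides with the paper.
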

	\begin{proof}
		See Appendix \ref{proofcoverageprobability2}.
	\end{proof}
	We can easily observe from Theorem \ref{coverageprobability2} that the interference level experienced by the typical gUE using the PbA scheme is lower compared to that with the DbA scheme. This observation can be justified by the fact that the PbA scheme prioritizes the selection of LEO satellites based on the received signal power. Therefore, the gUE is more likely to select a satellite with a stronger signal, which in turn reduces the amount of interference from other LEO satellites.
	\subsubsection{Coverage probability under the RbA scheme}
	By considering that the typical gUE is associated with a LEO satellite belonging in the $k$-th tier, the PDF of the contact distance with the employment of the RbA scheme is presented in the following lemma.
	\begin{Lemma}
		When the RbA scheme is employed, the PDF of contact distance is given by
		\begin{align}
			f_{R,k}(r|\mathcal{A}_{R,k})=\frac{2\pi\lambda_kr(R_{\oplus}+H_k)}{R_{\oplus}}\exp\left(-\frac{\pi\lambda_k(R_{\oplus}+H_k)(r^2-H_k^2)}{R_{\oplus}}\right).
		\end{align}
	\end{Lemma}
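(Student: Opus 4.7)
The plan is to exploit the fact that, under the RbA scheme, the second-stage selection is \emph{independent} of the geometry of the point processes $\{\Phi_k\}_{k=1}^{K}$. In the first stage, exactly as in the DbA and PbA schemes, the candidate set $\mathcal{C}=\{x_{0,1},\dots,x_{0,K}\}$ is formed by taking the nearest LEO satellite from each tier to the typical gUE. The contact distance to the candidate from tier $k$ is therefore the distance $r_{0,k}$, whose marginal PDF is already given by Lemma~\ref{contactdistance}, namely $f_D(r,\lambda_k,H_k)$ with support $H_k\le r<R_{\mathrm{max}_k}$.

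Next, I would apply the definition of conditional density. By Bayes' rule, $f_{R,k}(r\mid\mathcal{A}_{R,k}) = \frac{\Pr(\mathcal{A}_{R,k}\mid r_{0,k}=r)\,f_D(r,\lambda_k,H_k)}{\mathcal{A}_{R,k}}$. The RbA scheme selects the serving satellite uniformly at random among the $K$ candidates, and crucially this coin flip does not depend on the realized distances $\{r_{0,j}\}_{j=1}^{K}$. Hence $\Pr(\mathcal{A}_{R,k}\mid r_{0,k}=r)=\Pr(\mathcal{A}_{R,k})=\tfrac{1}{K}$, so the ratio reduces to $f_D(r,\lambda_k,H_k)$ itself. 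Substituting the expression from Lemma~\ref{contactdistance} immediately yields the stated formula.

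There is essentially no technical obstacle here, in contrast with the DbA and PbA cases (Lemmas~\ref{conditionaldistancepdf} and~\ref{ConditionAveragePowerPdf}), where the tier-selection event couples all candidate distances/powers and forces the use of a product of complementary CDFs. The only point that deserves a brief justification is the independence of the random selection from the spatial point processes; I would state this as an explicit modelling assumption embedded in the RbA description, so that the cancellation $\Pr(\mathcal{A}_{R,k}\mid r_{0,k}=r)/\mathcal{A}_{R,k}=1$ is immediate. The proof is thus essentially one line once Lemma~\ref{contactdistance} is invoked.
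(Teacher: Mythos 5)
Your proof is correct and follows essentially the same route as the paper: both arguments hinge on the observation that the uniform second-stage selection in the RbA scheme is independent of $r_{0,k}$, so that conditioning on the association event leaves the distribution of Lemma~\ref{contactdistance} unchanged (the paper phrases this via the conditional CDF, you via Bayes' rule on densities, which is an immaterial difference).
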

	\begin{proof}
		The CDF of the contact distance by conditioning on the case that the typical gUE is associated with the $k$-th tier LEO satellite is given by
		\begin{align}
			F_{R,k}(r|\mathcal{A}_{R,k})
			=1-\mathbb{P}\left\{r_{0,k}\ge r|\text{ the typical gUE is associated with the $k$-th tier}\right\}.
		\end{align}
		Since the tier selection is based on a random operation, i.e. it is independent with $r_{0,k}$, we have $F_{R,k}(r|\mathcal{A}_{R,k})=1-\mathbb{P}\left\{r_{0,k}\ge r\right\}$. By directly following the results presented in Lemma \ref{contactdistance}, the final expression of $f_{R,k}(r|\mathcal{A}_{R,k})$ is derived.
	\end{proof}
	We now present the analytical expression for the coverage probability achieved by the RbA scheme in the following theorem.
	\begin{Theorem}\label{coverageprobability3}
		The coverage probability achieved by the RbA scheme is given by
		\begin{equation}
			\mathcal{P}_R(\beta)=\sum\limits_{k=1}^{K}\mathcal{A}_{R,k}\mathcal{P}_{R,k}(\beta),
		\end{equation}
		where
		\begin{align}
			&\mathcal{P}_{R,k}(\beta)\approx\int\nolimits_{H_k}^{R_{\mathrm{max}_k}}\left(1-F_h\left(\frac{\beta r_{0,k}^{\alpha}(\widetilde{\mathcal{I}}_R(r_{0,k})+\sigma^2)}{ P_{k}G^L_{m,k}G^U_m}\right)\right)f_{R,k}(r_{0,k}|\mathcal{A}_{R,k})\mathrm{d}r_{0,k},\\
			\hspace{-10mm}\text{and}&\nonumber\\
			&\widetilde{\mathcal{I}}_R(r_{0,k}) = \frac{1}{(\alpha-2)R_{\oplus}}\bigg({2\pi \lambda_kP_kG_{s,k}^LG_s^U\bar{h}(R_{\oplus}+H_k)(r_{0,k}^{2-\alpha}-R_{\mathrm{max}_k}^{2-\alpha})}\nonumber\\&\hspace{4.5cm}+\sum\limits_{j=1,j\neq k}^{K}{2\pi\lambda_jP_jG_{s,j}^LG_s^U\bar{h}(R_{\oplus}+H_j)(H_j^{2-\alpha}-R_{\mathrm{max}_j}^{2-\alpha})}\bigg).
		\end{align}
	\end{Theorem}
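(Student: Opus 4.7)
The plan is to follow the same template used in the proofs of Theorems~\ref{coverageprobability} and~\ref{coverageprobability2}, but with two key modifications that reflect the distinguishing features of the RbA scheme: (i) the tier-conditioned contact-distance distribution is simply the marginal distribution from Lemma~\ref{contactdistance}, because the second-stage selection is independent of all $r_{0,k}$; and (ii) the set of interferers in the serving tier differs from that in the non-serving tiers, which is precisely the structural difference visible in the expression for $\widetilde{\mathcal{I}}_R(r_{0,k})$.

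First I would apply the law of total probability over the tier chosen in the second stage, which immediately yields $\mathcal{P}_R(\beta)=\sum_{k=1}^K \mathcal{A}_{R,k}\mathcal{P}_{R,k}(\beta)$, with $\mathcal{A}_{R,k}=1/K$. For each $\mathcal{P}_{R,k}(\beta)$, I would express the coverage event as $\{h_{0,k}\, P_k G^L_{m,k}G^U_m\, r_{0,k}^{-\alpha} \ge \beta(\mathcal{I}+\sigma^2)\}$, and invoke the CCDF $1-F_h(\cdot)$ of the channel power gain (given in~\eqref{channelcdf}) after approximating the aggregate interference $\mathcal{I}$ by its conditional mean $\widetilde{\mathcal{I}}_R(r_{0,k})$, exactly as done in Theorem~\ref{coverageprobability}. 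Finally, I would average over $r_{0,k}$ using the conditional PDF $f_{R,k}(r_{0,k}\mid \mathcal{A}_{R,k})$ from the preceding lemma, producing the stated integral.

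The main technical step is the computation of $\widetilde{\mathcal{I}}_R(r_{0,k})$, and this is where the RbA scheme diverges from the DbA/PbA cases. Conditioned on tier $k$ being served, the serving satellite is $x_{0,k}$, so all other satellites of tier~$k$ act as interferers with distances in $(r_{0,k},R_{\max_k}]$; however, for every non-serving tier $j\neq k$, the nearest candidate $x_{0,j}$ was \emph{rejected} by the random selection but remains visible, so all satellites of tier~$j$ at distances in $[H_j,R_{\max_j}]$ contribute to interference. Applying Campbell's theorem on the sphere and converting to the Euclidean-distance domain via the Jacobian used in the proof of Lemma~\ref{contactdistance}, each tier contributes a term proportional to $\int_{a_j}^{R_{\max_j}} r^{-\alpha}\, r\,\mathrm{d}r = \frac{a_j^{2-\alpha}-R_{\max_j}^{2-\alpha}}{\alpha-2}$, where $a_j=r_{0,k}$ for $j=k$ and $a_j=H_j$ otherwise. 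Multiplying by $2\pi\lambda_j(R_\oplus+H_j)/R_\oplus$, by $P_j G^L_{s,j}G^U_s$, and by $\bar h$ (the mean channel gain, since the antenna array gain on interfering links is the side-lobe product and the fading is averaged out) yields precisely the stated closed form.

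The hard part, as in the previous two theorems, is justifying the mean-interference approximation; its accuracy ultimately relies on the validation through Monte-Carlo simulation in Section~\ref{numsim}. The only subtlety unique to the RbA case is the correct accounting of the serving-tier interference floor at $r_{0,k}$ rather than $H_k$, which is why the first summand in $\widetilde{\mathcal{I}}_R(r_{0,k})$ is isolated from the $j\neq k$ contributions. Everything else reduces to an invocation of the independence of the PPPs $\{\Phi_j\}_{j=1}^K$, which allows the per-tier mean-interference contributions to be added.
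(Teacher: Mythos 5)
Your proposal is correct and follows essentially the same route as the paper, whose proof of this theorem simply defers to the steps of Appendix F with the RbA-specific association probability and mean interference; your explicit identification of the integration lower limits ($r_{0,k}$ for the serving tier, $H_j$ for the non-serving tiers, since the rejected candidates impose no exclusion region) is exactly the modification the paper intends. No gaps.
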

	\begin{proof}
		The proof follows similar steps presented in Appendix \ref{proofcoverageprobability}, but with the average interference power and association probability that correspond to the RbA scheme.
	\end{proof}
	Some key insights can be obtained from the expression presented in the above theorem. Although the RbA scheme has a low-complexity implementation, it results in a higher level of interference experienced by the gUEs compared to the other two schemes. Specifically, the random tier selection results in some nearby LEO satellites from other tiers contributing significant interference to the typical gUE.
	\begin{comment}
		\textcolor{blue}{Some key insights can be obtained from the expression presented in the above theorem. Although the RbA scheme has a low-complexity implementation, it results in a higher level of interference experienced by the gUEs compared to the other two schemes. Specifically, the random tier selection results in some nearby LEO satellites from other tiers contributing significant interference to the typical gUE.}
	\end{comment}
	\subsubsection{Special fading scenarios}
	By aiming to evaluate the effects of fading channels on the heterogeneous LEO satellites networks, we investigate the achieved performance in another two fading scenarios, namely, Rayleigh and non-fading channels. Rayleigh fading model is used to capture the worst scenario with weak or obstructed LoS links, due to the near-ground blockages \cite{Akram1}. Although the Rayleigh fading channel has been discussed in \cite{Okati} for the single-tier LEO satellite-based networks with simplistic distance-based association scheme, we now evaluate the coverage probability with Rayleigh fading channels in terms of each proposed association scheme. The exact analytical expression for the coverage probability achieved by the each association scheme in Rayleigh fading scenario is provided in following Proposition.
	\begin{Proposition}\label{RayleighDbA}
		The coverage probabilities achieved by the DbA, PbA and RbA schemes with Rayleigh fading channels are given by
		\begin{equation}
			\mathcal{P}^{\dagger}_{\Delta}(\beta)=\sum_{k=1}^{K}\mathcal{A}_{\Delta,k}\mathcal{P}^{\dagger}_{\Delta,k}(\beta),
		\end{equation}
		where $\Delta\in\{D,P,R\}$ stands for the DbA, PbA and RbA schemes, respectively, $\mathcal{P}^{\dagger}_{\Delta,k}(\beta)$ is given by
		\begin{equation}
			\mathcal{P}^{\dagger}_{\Delta,k}(\beta)=\begin{cases}
				\int_{H_k}^{R_{\mathrm{max}_k}}\mathcal{L}^{\dagger}_{\mathcal{I}_D}(\varphi_D)\exp(-\varphi_D\sigma^2)f_{D,k}(r_{0,k}|\mathcal{A}_{D,k})\mathrm{d}r_{0,k},& \textit{\bf if $\Delta=D$ },\\
				\int_{\xi_{\mathrm{min}_k}}^{\xi_{\mathrm{max}_k}}\mathcal{L}^{\dagger}_{\mathcal{I}_P}(\varphi_P)\exp(-\varphi_P\sigma^2)f_{P,k}(\xi_{0,k}|\mathcal{A}_{P,k})\mathrm{d}\xi_{0,k},& \textit{\bf if $\Delta=P$ },\\
				\int_{H_k}^{R_{\mathrm{max}_k}}\mathcal{L}^{\dagger}_{\mathcal{I}_R}(\varphi_R)\exp(-\varphi_R\sigma^2)f_{R,k}(r_{0,k}|\mathcal{A}_{R,k})\mathrm{d}r_{0,k},& \textit{\bf if $\Delta=R$ },
			\end{cases}
		\end{equation}
		$\mathcal{L}^{\dagger}_{\mathcal{I}_{\Delta}}(\varphi_{\Delta})$ is given by
		\begin{equation}\label{ray1}
			\mathcal{L}^{\dagger}_{\mathcal{I}_{\Delta}}(\varphi_{\Delta})=
			\begin{cases}
				\prod\limits_{j=1}^{K}\exp\left(-\frac{2\pi\lambda_jP_jG_{s,j}^LG_s^U\varphi_D(R_{\oplus}+H_j)}{{R_{\oplus}}(\alpha-2)/(\varpi_j(\max\{r_{0,k},H_j\},\varphi_D)-\varpi_j(R_{\mathrm{max}_j},\varphi_D))}\right),& \textit{\bf if $\Delta=D$},\\
				\prod\limits_{j=1}^{K}\exp\Big(-\frac{2\pi\lambda_jP_jG_{s,j}^LG_s^U\varphi_P(R_{\oplus}+H_j)}{{R_{\oplus}}(\alpha-2)/(\varpi_j(\max\{({\varrho_k}/{\xi_{0,k}})^{1/\alpha},H_j\},\varphi_P)-\varpi_j(R_{\mathrm{max}_j},\varphi_P))}\Big),&  \textit{\bf if $\Delta=P$ },\\
				\exp\left(-\frac{2\pi\lambda_kP_kG_{s,k}^LG_s^U\varphi_R(R_{\oplus}+H_k)}{{R_{\oplus}}(\alpha-2)/(\varpi_k(r_{0,k},\varphi_R)-\varpi_k(R_{\mathrm{max}_k},\varphi_R))}\right)\times\\
				\hspace{2cm}\prod\limits_{j=1,j\neq k}^{K}\exp\left(-\frac{2\pi\lambda_jP_jG_{s,j}^LG_s^U\varphi_R(R_{\oplus}+H_j)}{{R_{\oplus}}(\alpha-2)/(\varpi_j(H_{j},\varphi_R)-\varpi_j(R_{\mathrm{max}_j},\varphi_R))}\right),&  \textit{\bf if $\Delta=R$},
			\end{cases}
		\end{equation} $\varpi_j(r,\varphi)={r}^{2-\alpha}\ {}_2F_1\left[1,\frac{\alpha-2}{\alpha};2-\frac{2}{\alpha};-P_jG_{s,j}^LG_s^U r^{-\alpha}\varphi\right]$; $\varphi_D=\varphi_R=\frac{\beta}{P_kG_{m,k}^LG_m^Ur_{0,k}^{-\alpha}}$, $\varphi_P=\frac{\beta}{\xi_{0,k}}$.
	\end{Proposition}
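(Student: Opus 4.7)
The plan is to exploit the memoryless property of the exponential distribution that arises under Rayleigh fading: since $h\sim\mathrm{Exp}(1)$, the complementary CDF is $\mathbb{P}\{h>x\}=e^{-x}$, so that the event $\{\mathrm{SINR}>\beta\}$ conditioned on a fixed serving link reduces to $\{h>\varphi_\Delta(\mathcal{I}_\Delta+\sigma^2)\}$, whose probability is exactly $\exp(-\varphi_\Delta\sigma^2)\,\mathcal{L}_{\mathcal{I}_\Delta}(\varphi_\Delta)$. In contrast to Theorems \ref{coverageprobability}--\ref{coverageprobability3}, this observation bypasses the mean-interference approximation and yields an \emph{exact} closed form in terms of a Laplace transform. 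I would then de-condition by integrating against the conditional serving-link PDF already available from Lemma \ref{conditionaldistancepdf} for the DbA case, Lemma \ref{ConditionAveragePowerPdf} for the PbA case, and the PDF stated just before Theorem \ref{coverageprobability3} for the RbA case, and finally sum over $k$ weighted by $\mathcal{A}_{\Delta,k}$.

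The principal technical step is evaluating $\mathcal{L}_{\mathcal{I}_\Delta}(\varphi_\Delta)$. I would first invoke the independence of the tier-wise PPPs $\{\Phi_j\}_{j=1}^K$ to factor the Laplace transform as a product over $j$. For each tier I apply the PGFL of $\Phi_j$, mapping the spherical PPP on the surface of radius $R_{\oplus}+H_j$ to a one-dimensional radial PPP with intensity $2\pi\lambda_j(R_{\oplus}+H_j)r/R_{\oplus}$ via the projection argument used in the proof of Lemma \ref{contactdistance}. Taking the fading expectation with $\mathbb{E}_h[e^{-sh}]=1/(1+s)$, the tier-$j$ contribution becomes
\begin{align*}
\exp\!\left(-\int_{r_{\min,j}}^{R_{\max_j}}\frac{\varphi_\Delta P_jG_{s,j}^LG_s^U r^{-\alpha}}{1+\varphi_\Delta P_jG_{s,j}^LG_s^U r^{-\alpha}}\cdot\frac{2\pi\lambda_j(R_{\oplus}+H_j)r}{R_{\oplus}}\,\mathrm{d}r\right),
\end{align*}
and the inner integral closes in terms of the Gauss hypergeometric ${}_2F_1$, producing precisely the auxiliary function $\varpi_j(r,\varphi)$ that appears in the statement.

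The hard part, and what distinguishes the three schemes, is identifying the correct lower limit $r_{\min,j}$ of the radial integral in each case. For DbA, global nearest-satellite selection forces every tier-$j$ interferer to lie beyond $\max\{r_{0,k},H_j\}$, since otherwise it would itself be the serving satellite. For PbA, the analogous exclusion zone follows from the power-based candidate ordering together with the power adjustment mechanism of Section \ref{PAM}, yielding $\max\{(\varrho_k/\xi_{0,k})^{1/\alpha},H_j\}$ and mirroring the product structure already present in Lemma \ref{ConditionAveragePowerPdf}. For RbA, random tier selection decouples the non-serving tiers from the conditioning on the serving link, so interferers in every tier $j\neq k$ occupy the full visible shell $[H_j,R_{\max_j}]$, while tier $k$ retains the lower limit $r_{0,k}$; this is exactly the split product structure appearing in the third branch of \eqref{ray1}. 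Once $r_{\min,j}$ has been fixed for each scheme, substituting it into $\varpi_j$, multiplying across tiers, multiplying by $\exp(-\varphi_\Delta\sigma^2)$, and integrating against the appropriate conditional serving-link PDF delivers the three expressions for $\mathcal{P}^{\dagger}_{\Delta,k}(\beta)$, completing the derivation.
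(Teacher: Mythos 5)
Your proposal follows essentially the same route as the paper's proof: exploiting the exponential CCDF of the Rayleigh channel gain to obtain the exact Laplace-transform form $\exp(-\varphi_\Delta\sigma^2)\mathcal{L}_{\mathcal{I}_\Delta}(\varphi_\Delta)$, evaluating the Laplace transform tier-by-tier via the PGFL with the projected radial intensity $2\pi\lambda_j(R_\oplus+H_j)r/R_\oplus$ and the integrand $1-1/(1+\varphi P_jG_{s,j}^LG_s^U r^{-\alpha})$ closed in terms of ${}_2F_1$, and fixing the scheme-dependent exclusion zones before de-conditioning on the serving-link distributions. The argument, including the per-scheme lower integration limits, matches the paper's Appendix I.
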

	\begin{proof}
		See Appendix \ref{RayleighDbAproof}.
	\end{proof}
	In contrast to the previous discussion with the Rayleigh fading channel, in the following proposition, we investigate another extreme case, i.e. non-fading scenario. In this case, the channel remains constant, i.e. $h=1$; this scenario provides a useful benchmark to evaluate the impact of the fading effect on the performance of LEO satellites networks.  
	\begin{Proposition}\label{nonfadingDbA}
		The coverage probabilities achieved by the DbA, PbA and RbA schemes with non-fading channels are given by
		\begin{align}
			\mathcal{P}^{\ddagger}_{\Delta}(\beta)=\sum_{k=1}^{K}\mathcal{A}_{\Delta,k}\mathcal{P}^{\ddagger}_{\Delta,k}(\beta),
		\end{align}
		where 
		\begin{equation}\label{nfc1}
			\mathcal{P}^{\ddagger}_{\Delta,k}(\beta)=\begin{cases}
				\int\nolimits_{H_k}^{R_{\mathrm{max}_k}}\Big(\frac{1}{2}-\frac{1}{\pi}\int\nolimits_{0}^{\infty}\frac{\Im\left\{\psi(t,\mathcal{S}_D)\exp(it\beta\sigma^2)\right\}}{ t}\mathrm{d}t\Big)f_{D,k}(r_{0,k}|\mathcal{A}_{D,k})\mathrm{d}r_{0,k}, &\textit{\bf if $\Delta=D$},\\
				\int\nolimits_{\xi_{\mathrm{min}_k}}^{\xi_{\mathrm{max}_k}}\Big(\frac{1}{2}-\frac{1}{\pi}\int\nolimits_{0}^{\infty}\frac{\Im\left\{\psi(t,\mathcal{S}_P)\exp(it\beta\sigma^2)\right\}}{t}\mathrm{d}t\Big)f_{P,k}(\xi_{0,k}|\mathcal{A}_{P,k})\mathrm{d}\xi_{0,k}, & \textit{\bf if $\Delta=P$},\\
				\int\nolimits_{H_k}^{R_{\mathrm{max}_k}}\Big(\frac{1}{2}-\frac{1}{\pi}\int\nolimits_{0}^{\infty}\frac{\Im\left\{\psi(t,\mathcal{S}_R)\exp(it\beta\sigma^2)\right\}}{ t}\mathrm{d}t\Big)f_{R,k}(r_{0,k}|\mathcal{A}_{R,k})\mathrm{d}r_{0,k}, &\textit{\bf if $\Delta=R$},
			\end{cases}
		\end{equation}
		$i=\sqrt{-1}$, $\psi(t,\mathcal{S}_{\Delta})=\exp(-it\mathcal{S}_{\Delta})\mathcal{L}^{\ddagger}_{\mathcal{I}_{\Delta}}(-it\beta)$, $\mathcal{S}_{D}=\mathcal{S}_{R}=P_kG_{m,k}^LG_m^Ur_{0,k}^{-\alpha}$, $\mathcal{S}_{P}=\xi_{0,k}$, $\mathcal{L}^{\ddagger}_{\mathcal{I}_{\Delta}}(z)$ is given by 
		\begin{equation}
			\mathcal{L}^{\ddagger}_{\mathcal{I}_{\Delta}}(z)=
			\begin{cases}
				\prod\limits_{j=1}^{K}\exp\Big(\frac{\pi\lambda_j(R_{\oplus}+H_j)}{{R_{\oplus}}\alpha/(\vartheta_j(\max\{r_{0,k},H_j\},z)-\vartheta_j(R_{\mathrm{max}_j},z))}\Big), & \textit{\bf if $\Delta=D$},\\
				\prod\limits_{j=1}^{K}\exp\Big(\frac{\pi\lambda_j(R_{\oplus}+H_j)}{{R_{\oplus}}\alpha/(\vartheta_j\big(\max\{({\varrho_k}/{\xi_{0,k}})^{1/\alpha},H_j\},z\big)-\vartheta_j(R_{\mathrm{max}_j},z))}\Big),& \textit{\bf if $\Delta=P$},\\
				\exp\left(\frac{\pi\lambda_k(R_{\oplus}+H_k)}{{R_{\oplus}}\alpha/(\vartheta_k(r_{0,k},z)-\vartheta_k(R_{\mathrm{max}_k},z))}\right)\prod\limits_{{j=1,j\neq k}}^{K}\hspace{-0mm}\exp\Big(\frac{\pi\lambda_j(R_{\oplus}+H_j)}{{R_{\oplus}}\alpha/(\vartheta_j(H_j,z)-\vartheta_j(R_{\mathrm{max}_j},z))}\Big),& \textit{\bf if $\Delta=R$},
			\end{cases}
		\end{equation}	$\vartheta_j(r,z)=P_jG_{s,j}^LG_s^U\left(r^2\left(\alpha-2\mathcal{E}\left(\frac{2+\alpha}{\alpha},r^{-\alpha}z\right)\right)+2z^{\frac{2}{\alpha}}\Gamma\left[-\frac{2}{\alpha}\right]\right)$, and $\mathcal{E}(a,b)=\int_{1}^{\infty}\frac{e^{-bx}}{x^a}\mathrm{d}x$.
	\end{Proposition}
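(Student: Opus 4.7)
The plan is to handle the non-fading case by exploiting that, once we condition on the association and the contact distance (or average received power) to the serving LEO satellite, the desired signal power becomes deterministic, while only the aggregate interference remains random. Consequently, the coverage event $\mathrm{SINR} > \beta$ reduces to the one-sided event $\beta\,\mathcal{I}_\Delta - \mathcal{S}_\Delta < -\beta\sigma^2$ on the real random variable $Y = \beta\mathcal{I}_\Delta - \mathcal{S}_\Delta$, whose CDF I would invert via the Gil-Pelaez formula
\begin{equation*}
F_Y(y) \;=\; \frac{1}{2} - \frac{1}{\pi}\int_0^\infty \frac{\Im\{e^{-ity}\,\phi_Y(t)\}}{t}\,\mathrm{d}t,
\end{equation*}
with $\phi_Y(t) = e^{-it\mathcal{S}_\Delta}\mathcal{L}_{\mathcal{I}_\Delta}(-it\beta)$. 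Setting $y = -\beta\sigma^2$ produces exactly the $\psi(t,\mathcal{S}_\Delta)\exp(it\beta\sigma^2)$ kernel appearing in \eqref{nfc1}. This bypasses the need for any MGF of the fading (as $h=1$) and is the standard route when the signal is deterministic but the interference is a shot noise.

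Next I would compute $\mathcal{L}^{\ddagger}_{\mathcal{I}_\Delta}(z)$ using the probability generating functional (PGFL) of each independent tier PPP $\Phi_j$. Mapping the spherical surface intensity to distance from the typical gUE yields the one-dimensional intensity $\frac{2\pi\lambda_j(R_\oplus+H_j)}{R_\oplus}\,r$ on $[H_j,R_{\mathrm{max}_j}]$ (as in Lemma \ref{contactdistance}), so that
\begin{equation*}
\mathcal{L}_{\mathcal{I}_j}(z) \;=\; \exp\!\left(-\frac{2\pi\lambda_j(R_\oplus+H_j)}{R_\oplus}\int_{r_j^{\mathrm{low}}}^{R_{\mathrm{max}_j}}\!\!\bigl(1-e^{-z P_j G_{s,j}^L G_s^U t^{-\alpha}}\bigr)\,t\,\mathrm{d}t\right),
\end{equation*}
independence of the tiers giving the product structure in $\mathcal{L}^{\ddagger}_{\mathcal{I}_\Delta}$. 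The scheme-specific lower limit $r_j^{\mathrm{low}}$ encodes the exclusion region imposed by the association rule: for the DbA scheme every tier $j$ must have its nearest satellite at distance $\ge r_{0,k}$, hence $r_j^{\mathrm{low}}=\max\{r_{0,k},H_j\}$; for the PbA scheme the power adjusting mechanism in Section \ref{PAM} translates the condition $\xi_{0,j}\le \xi_{0,k}$ into $r_{0,j}\ge(\varrho_k/\xi_{0,k})^{1/\alpha}$, giving $r_j^{\mathrm{low}}=\max\{(\varrho_k/\xi_{0,k})^{1/\alpha},H_j\}$; whereas for the RbA scheme only the chosen tier $k$ inherits the exclusion $r_{0,k}$ and all other tiers $j\neq k$ start from $H_j$.

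For the inner radial integral I would write $\int(1-e^{-z P_j G_{s,j}^L G_s^U t^{-\alpha}})\,t\,\mathrm{d}t$ and integrate by parts, identifying the exponential-integral primitive $\mathcal{E}(a,b)=\int_1^\infty x^{-a}e^{-bx}\mathrm{d}x$ after the change of variables $x = (t/r)^\alpha$, which yields a primitive proportional to $r^2\bigl(\alpha-2\mathcal{E}(\tfrac{2+\alpha}{\alpha},r^{-\alpha}z)\bigr)$ together with a constant piece $2z^{2/\alpha}\Gamma[-2/\alpha]$ coming from the infinite-radius reference term; evaluating between the two endpoints produces the difference $\vartheta_j(r_j^{\mathrm{low}},z)-\vartheta_j(R_{\mathrm{max}_j},z)$. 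Substituting $z=-it\beta$ and multiplying over $j$ assembles $\mathcal{L}^{\ddagger}_{\mathcal{I}_\Delta}(-it\beta)$ exactly as claimed.

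Finally I would decondition: for DbA and RbA I average the inversion integral against $f_{D,k}(r_{0,k}|\mathcal{A}_{D,k})$ and $f_{R,k}(r_{0,k}|\mathcal{A}_{R,k})$ respectively, and for PbA against $f_{P,k}(\xi_{0,k}|\mathcal{A}_{P,k})$ from Lemma \ref{ConditionAveragePowerPdf}, then weight by the association probabilities $\mathcal{A}_{\Delta,k}$ and sum over the tiers. The main obstacle I anticipate is purely bookkeeping: correctly isolating the three scheme-dependent exclusion regions in the PGFL and then recognising the primitive of $\int(1-e^{-c t^{-\alpha}})t\,\mathrm{d}t$ in the compact $\vartheta_j$ form involving $\mathcal{E}$ and $\Gamma[-2/\alpha]$; all other steps are direct applications of Gil-Pelaez inversion and the PPP PGFL combined with results already established earlier in the paper.
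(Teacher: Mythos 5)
Your proposal follows essentially the same route as the paper's own proof: conditioning on the serving link so the desired power is deterministic, applying Gil--Pelaez inversion to the variable $\beta\mathcal{I}_\Delta-\mathcal{S}_\Delta$ (the paper uses the sign-flipped $\chi=\mathcal{S}-\beta\mathcal{I}$, which yields the identical kernel $\psi(t,\mathcal{S}_\Delta)e^{it\beta\sigma^2}$), evaluating $\mathcal{L}^{\ddagger}_{\mathcal{I}_\Delta}$ via the PGFL of the independent tier PPPs with the same scheme-dependent exclusion radii, and then deconditioning against the conditional PDFs and the association probabilities. The additional detail you give on reducing the radial integral to the $\vartheta_j$ form via the substitution $x=(t/r)^{\alpha}$ is consistent with the step the paper leaves implicit ("by solving the above integral"), so the argument is correct and matches the paper.
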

	\begin{proof}
		See Appendix \ref{nonfadingDbAproof}.
	\end{proof}
	\subsection{Spectral efficiency}
	We assess the spectral efficiency achieved by each gUE. Specifically, the spectral efficiency is measured in terms of the average ergodic rate per unit bandwidth \cite{Jo,Kalamkar}; for simplicity, the average ergodic rate is computed in unit of nats/s/Hz. By following a similar approach with the derivation of the coverage probability, the average ergodic rate achieved at the typical gUE is given by 
	\begin{equation}\label{se}
		{\mathcal{R}}_{\Delta}=\sum_{k=1}^{K}\mathcal{R}_{\Delta,k}\mathcal{A}_{\Delta,k},
	\end{equation}
	where $\mathcal{R}_{\Delta,k}$ denotes the average ergodic rate per unit bandwidth of a typical gUE associated with the $k$-tier. By considering that the typical gUE is associated with the $k$-th tier, an analytical expression for $\mathcal{R}_{\Delta,k}$ is presented in the following proposition.
	\begin{Proposition}
		The average ergodic rate per unit bandwidth of the typical gUE associated with the $k$-th tier is given by
		\begin{align}
			\mathcal{R}_{\Delta,k}=\int_{0}^{\infty}\frac{\mathcal{P}_{\Delta,k}(\beta)}{1+\beta}\mathrm{d}\beta,\label{sek}
		\end{align}
		where $\Delta\in\{D,P,R\}$ and $\mathcal{P}_{\Delta,k}(\beta)$ is given in Theorem \ref{coverageprobability}, \ref{coverageprobability2}\& \ref{coverageprobability3}.
		\begin{proof}
			The proof directly follows from the definition of the average ergodic rate and the spectral efficiency  \cite{Jo,Kalamkar}. 
		\end{proof}
	\end{Proposition}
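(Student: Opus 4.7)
The plan is to derive the ergodic-rate integral representation by combining two standard probability identities with an exponential change of variables. Because $\mathcal{P}_{\Delta,k}(\beta)$ is already defined in Theorems \ref{coverageprobability}, \ref{coverageprobability2} and \ref{coverageprobability3} as the \emph{conditional} coverage probability given association with the $k$-th tier, all expectations in what follows will be understood as conditioned on that association event; the outer sum in \eqref{se} and the $\mathcal{A}_{\Delta,k}$ weights then recover the unconditional rate via the law of total expectation.

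First I would write the conditional average ergodic rate per unit bandwidth as $\mathcal{R}_{\Delta,k}=\mathbb{E}\!\left[\ln(1+\mathrm{SINR}_{\Delta,k})\mid\mathcal{A}_{\Delta,k}\right]$, where $\mathrm{SINR}_{\Delta,k}$ denotes the instantaneous SINR at the typical gUE under association policy $\Delta\in\{D,P,R\}$ when the serving satellite belongs to the $k$-th tier. Since $\ln(1+\mathrm{SINR}_{\Delta,k})\ge 0$ almost surely, I would then invoke the layer-cake representation of the expectation of a non-negative random variable, i.e. $\mathbb{E}[Y]=\int_{0}^{\infty}\mathbb{P}(Y>t)\,\mathrm{d}t$, with $Y=\ln(1+\mathrm{SINR}_{\Delta,k})$, yielding
\begin{align}
\mathcal{R}_{\Delta,k}=\int_{0}^{\infty}\mathbb{P}\!\left(\ln(1+\mathrm{SINR}_{\Delta,k})>t\,\big|\,\mathcal{A}_{\Delta,k}\right)\mathrm{d}t=\int_{0}^{\infty}\mathbb{P}\!\left(\mathrm{SINR}_{\Delta,k}>e^{t}-1\,\big|\,\mathcal{A}_{\Delta,k}\right)\mathrm{d}t.
\end{align}

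Next I would perform the change of variables $\beta=e^{t}-1$, so that $\mathrm{d}\beta=e^{t}\,\mathrm{d}t=(1+\beta)\,\mathrm{d}t$ and the limits $t\in(0,\infty)$ map to $\beta\in(0,\infty)$. Identifying the integrand with the conditional coverage probability, $\mathbb{P}(\mathrm{SINR}_{\Delta,k}>\beta\mid\mathcal{A}_{\Delta,k})=\mathcal{P}_{\Delta,k}(\beta)$, produces exactly \eqref{sek}. Fubini's theorem justifies the interchange of integration orders implicit in the layer-cake step because the integrand is non-negative, and the change of variables is a smooth bijection so no regularity issue arises.

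I do not expect a substantive obstacle: the argument is the standard Hamdi/Jo-type ergodic-rate identity, and nothing about the multi-tier heterogeneous structure, the sectorized antenna model, the power adjusting mechanism of Section \ref{PAM}, or the specific fading law (SR, Rayleigh, or non-fading) enters the derivation, since all that complexity is already absorbed into the definition of $\mathcal{P}_{\Delta,k}(\beta)$. The only point that requires a brief justification is that the conditioning on $\mathcal{A}_{\Delta,k}$ propagates cleanly from the expectation to the probability through the layer-cake identity, which follows immediately from the conditional version of $\mathbb{E}[Y\mid\mathcal{F}]=\int_{0}^{\infty}\mathbb{P}(Y>t\mid\mathcal{F})\,\mathrm{d}t$ for non-negative $Y$; combining the per-tier expression with \eqref{se} then completes the proof.
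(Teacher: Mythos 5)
Your derivation is correct and is precisely the standard argument the paper invokes when it says the result ``directly follows from the definition of the average ergodic rate'': the layer-cake identity for the non-negative variable $\ln(1+\mathrm{SINR})$ followed by the substitution $\beta=e^{t}-1$. The paper simply cites this well-known identity rather than writing it out, so your proposal fills in the same route in full detail with no divergence in approach.
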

	Hence, by combining \eqref{se} and \eqref{sek}, the spectral efficiency of the network under each association scheme is obtained. Additionally, we can obtain the spectral efficiency of the network under the Rayleigh fading and non-fading scenarios, by replacing $\mathcal{P}_{\Delta,k}(\beta)$ with $\mathcal{P}^{\dagger}_{\Delta,k}(\beta)$ and $\mathcal{P}^{\ddagger}_{\Delta,k}(\beta)$, respectively.

	\begin{table}[t]
		\centering
		\caption{SIMULATION PARAMETERS.}
		\renewcommand{\arraystretch}{1}
		\scalebox{0.88}{
			\begin{tabular}{ | m{4.5cm} | m{1.8cm}| m{7cm} | m{3.5cm}|} 
				\hline
				{\bf Parameters}& {\bf Values}  &{\bf Parameters}& {\bf Values}\\ 
				\hline
				Network tiers ($K$) & $3$ &Altitude of LEO satellites ($H_1,H_2,H_3$)&$500$ km, $600$ km, $700$ km\\ 
				\hline
				Path-loss exponent ($\alpha$) & $3$ &Average number of LEO satellites ($N_1,N_2,N_3$)&$500$, $1000$, $1500$\\ 
				\hline
				Radius of  Earth ($R_{\oplus}$)&$6371$ km &Transmit power of LEO satellites ($P_1,P_2,P_3$)&$32$ W, $55.3$ W, $87.8$ W\\ 
				\hline
				Thermal noise power ($\sigma^2$)&$-81.4$ dBm &Main-lobe gain of LEO satellites $(G_{m,1}^L,G_{m,2}^L,G_{m,3}^L)$&$47$ dBi, $47$ dBi, $47$ dBi \\ 
				\hline
				Main-lobe gain of gUEs $\left(G_m^U\right)$ & $10$ dBi& Side-lobe gain of LEO satellites $(G_{s,1}^L,G_{s,2}^L,G_{s,3}^L)$&$27$ dBi, $27$ dBi, $27$ dBi\\ 
				\hline
				Side-lobe gain of gUEs $\left(G_s^U\right)$ & $0$ dBi  &Parameters of SR fading model $(b,\Omega,m)$&$(0.158,19.4,1.59)$\\ 
				\hline
		\end{tabular}}
		\vspace{-5mm}
		\label{simulation}
	\end{table}
	\section{Numerical and Simulation Results}\label{numsim}
	We present analytical and simulation results to validate the accuracy of our model and illustrate the performance of the proposed association schemes. Unless otherwise stated, we use the parameters given in Table \ref{simulation}. It is worth mentioning that, the proposed analytical framework is generic, and the selection for these parameter values is for the purpose of illustration \cite{Talgat}.
	
	\begin{figure}[t]
		\centering
		\subfigure[Coverage probability for DbA scheme versus the SINR threshold.]{
			\includegraphics[width=0.3\linewidth]{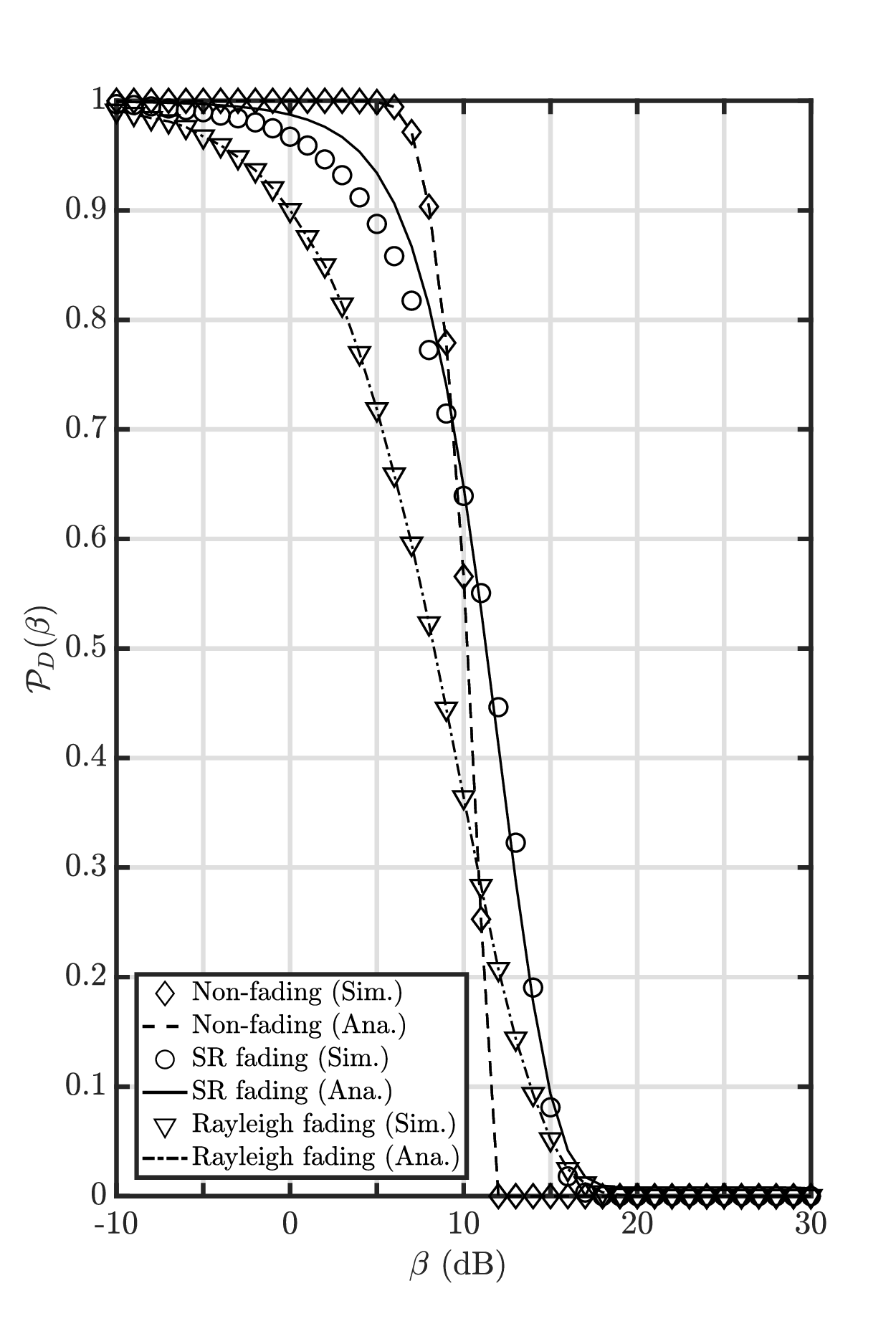}	
			\label{subfig1}
		}
		\hspace{1mm}
		\subfigure[Coverage probability for PbA scheme versus the SINR threshold.]{
			\includegraphics[width=0.3\linewidth]{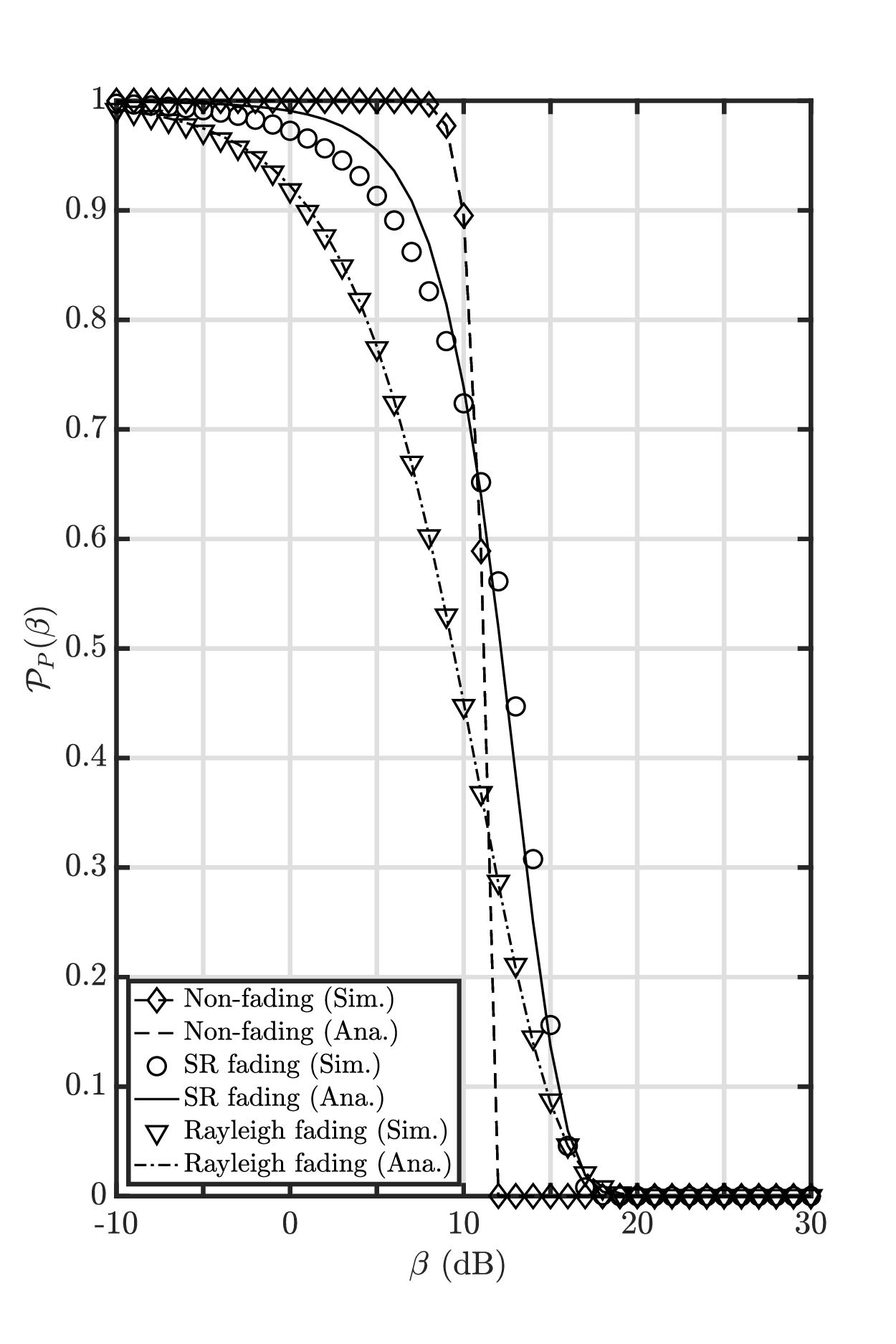}
			\label{subfig2}
		}
		\hspace{1mm}
		\subfigure[Coverage probability for RbA scheme versus the SINR threshold.]{
			\includegraphics[width=0.3\linewidth]{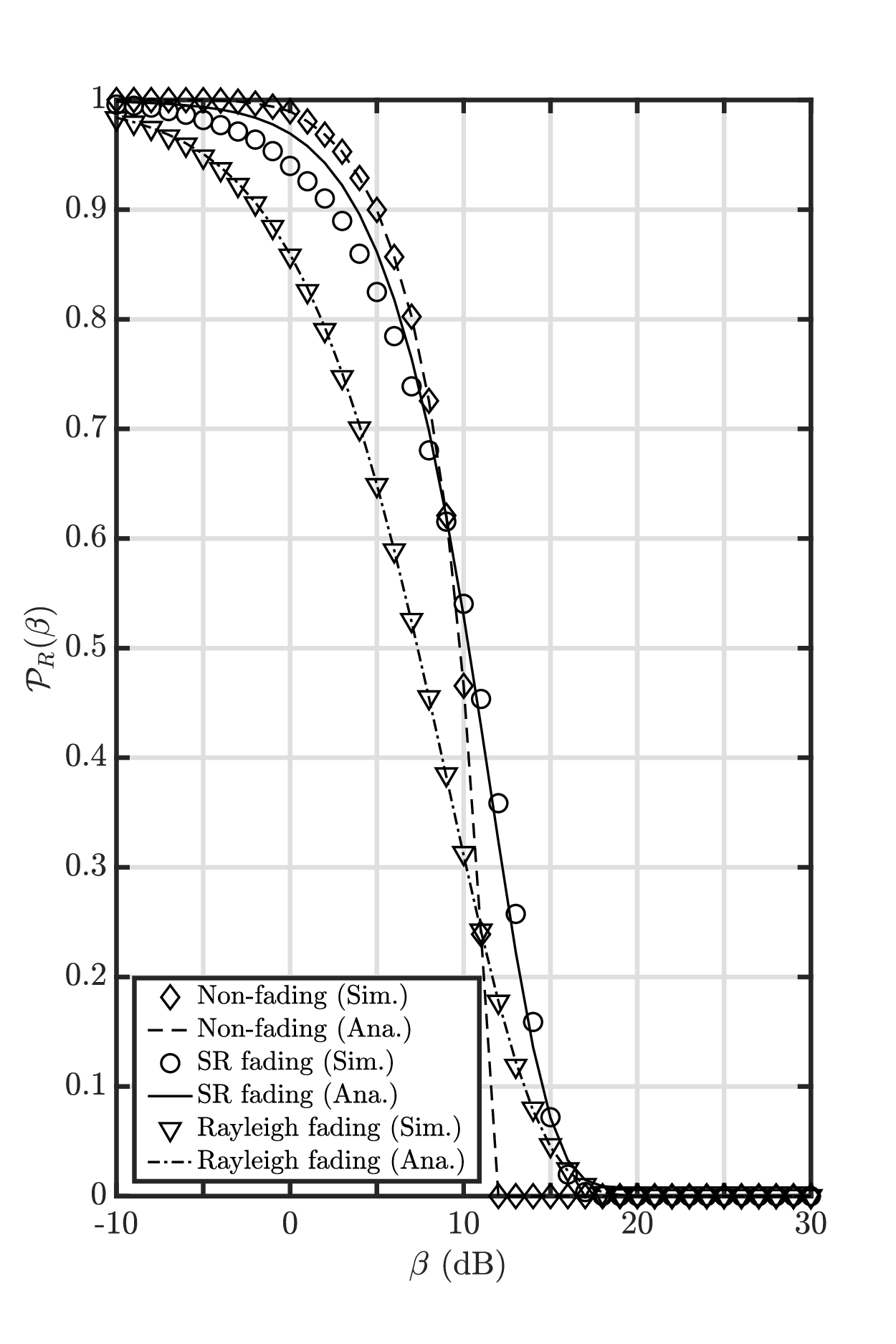}
			\label{subfig3}
		}
		\caption{Coverage probability achieved by each association scheme, in the Shadow Rician, the Rayleigh and the non-fading scenarios.}
		\label{coveragefig}
		\vspace{-8mm}
	\end{figure}
	
	Fig. \ref{coveragefig} illustrates the impact of fading channels on the coverage performance of LEO satellite networks. In specific, Fig. \ref{coveragefig} plots the coverage probability achieved by the three proposed association schemes, i.e. DbA, PbA, and RbA, in different fading scenarios that are denoted as ``Non-fading'', ``SR fading'', and ``Rayleigh fading'' . For the low SINR  regime, we observe that  the absence of fading results in a higher coverage probability in comparison to the cases with  fading, for all association schemes. However, as the SINR threshold increases, the decrease in the performance of the non-fading scenario is greater than the SR and  Rayleigh fading, leading to the observation that non-fading and SR fading achieves the worst and best performance at high SINRs, respectively. This can be justified by the fact that fading channel leads to the random attenuation of signals, including interference, which can consequently enhance the SINR in certain operation scenarios. Additionally, the non-fading scenario achieves a better performance at low SINR thresholds due to its stable signal strength. Moreover, the SR fading channel with its strong LoS component, corresponds to a higher signal quality than the Rayleigh fading channel, resulting in a higher SINR. Finally, we can observe that for each association scheme, the coverage performance achieved under the Rayleigh and non-fading scenarios demonstrate a perfect match between simulation (denoted as ``Sim.") and analysis (denoted as ``Ana.") results;
	the adopted approximations for the analysis of the SR fading scenario only lead to minor gaps between the simulation and analytical results at low SINR thresholds.
	\begin{figure}[t]
		\centering
		\includegraphics[width=0.8\linewidth]{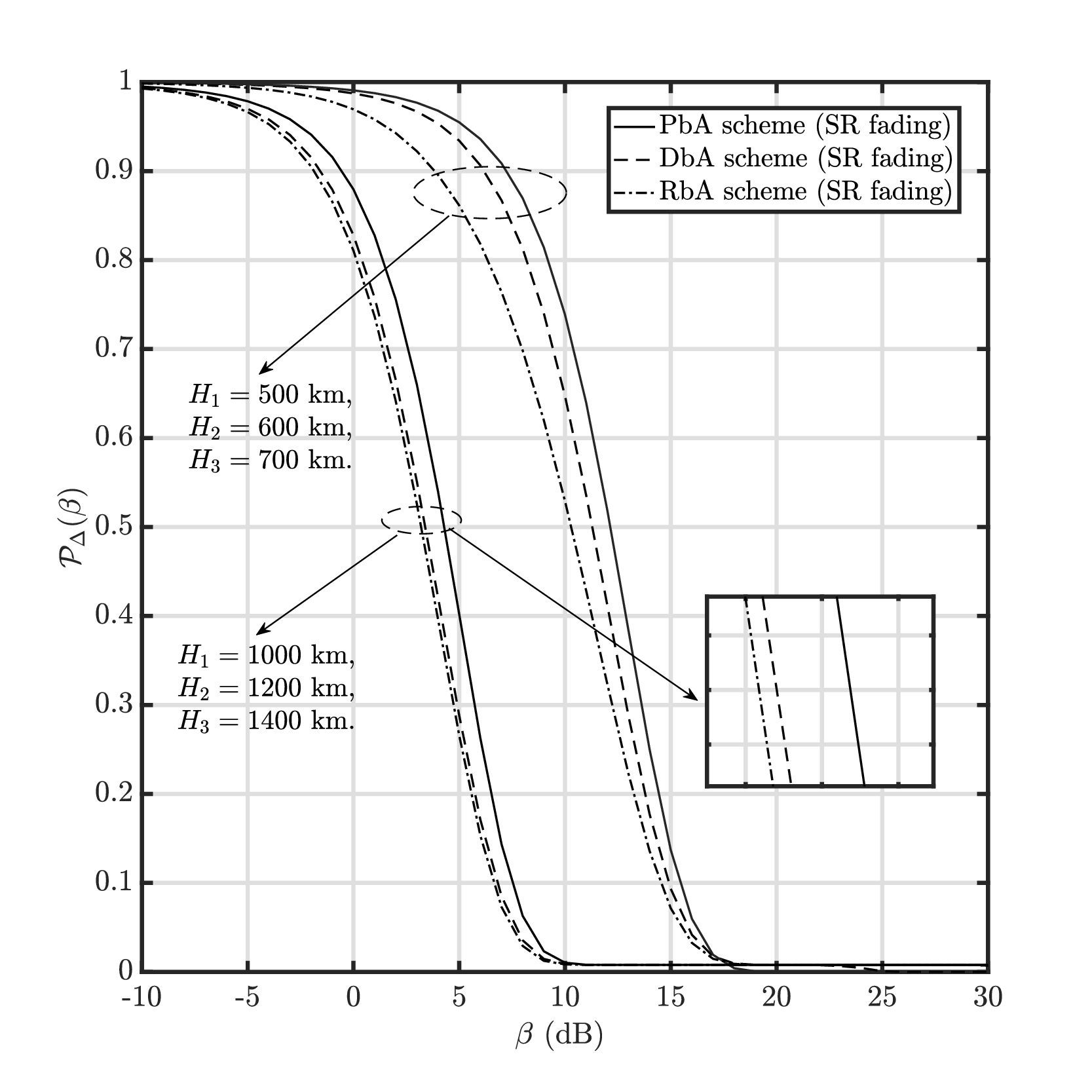}
		\caption{Coverage probability achieved by each association scheme versus the SINR threshold for different LEO satellites' altitudes.}
		\label{altitudeimpact}
	\end{figure}
	
	Fig. \ref{altitudeimpact} depicts the effect of LEO satellites' altitudes on the coverage performance. Specifically, Fig. \ref{altitudeimpact} plots the coverage probability versus the SINR threshold for different association schemes and different altitudes. Initially, it can be observed that the coverage probability decreases with the increase of the LEO satellites' altitudes. This was expected since higher altitudes of LEO satellites correspond to longer propagation distances between each gUE and its serving satellites, which results in a lower SINR. Moreover, we observe that the PbA scheme surpasses both the DbA and RbA schemes in terms of the coverage probability. This can be justified by the fact that the PbA scheme associates the gUE with the LEO satellite that delivers the highest average signal power, thereby boosting the SINR. In contrast, the DbA scheme associates the gUE with the nearest LEO satellite, that may lead to severe interference due to the existence of nearby satellites with larger transmit power, resulting in a degraded SINR at the gUE. Moreover, the RbA scheme randomly associates the gUE with one of the candidate LEO satellites, leading to the worst coverage probability. Finally, the difference in coverage probability between the DbA and RbA schemes diminishes as the satellites' altitude increases. This is due to the fact that as the altitude of the satellites increases, the difference in path-loss between the nearest and other candidate LEO satellites becomes less significant, and thus the advantage of associating the gUE with the nearest LEO satellite in the DbA scheme becomes negligible. 
	
	\begin{figure}[t]
		\centering
		\includegraphics[width=0.8\linewidth]{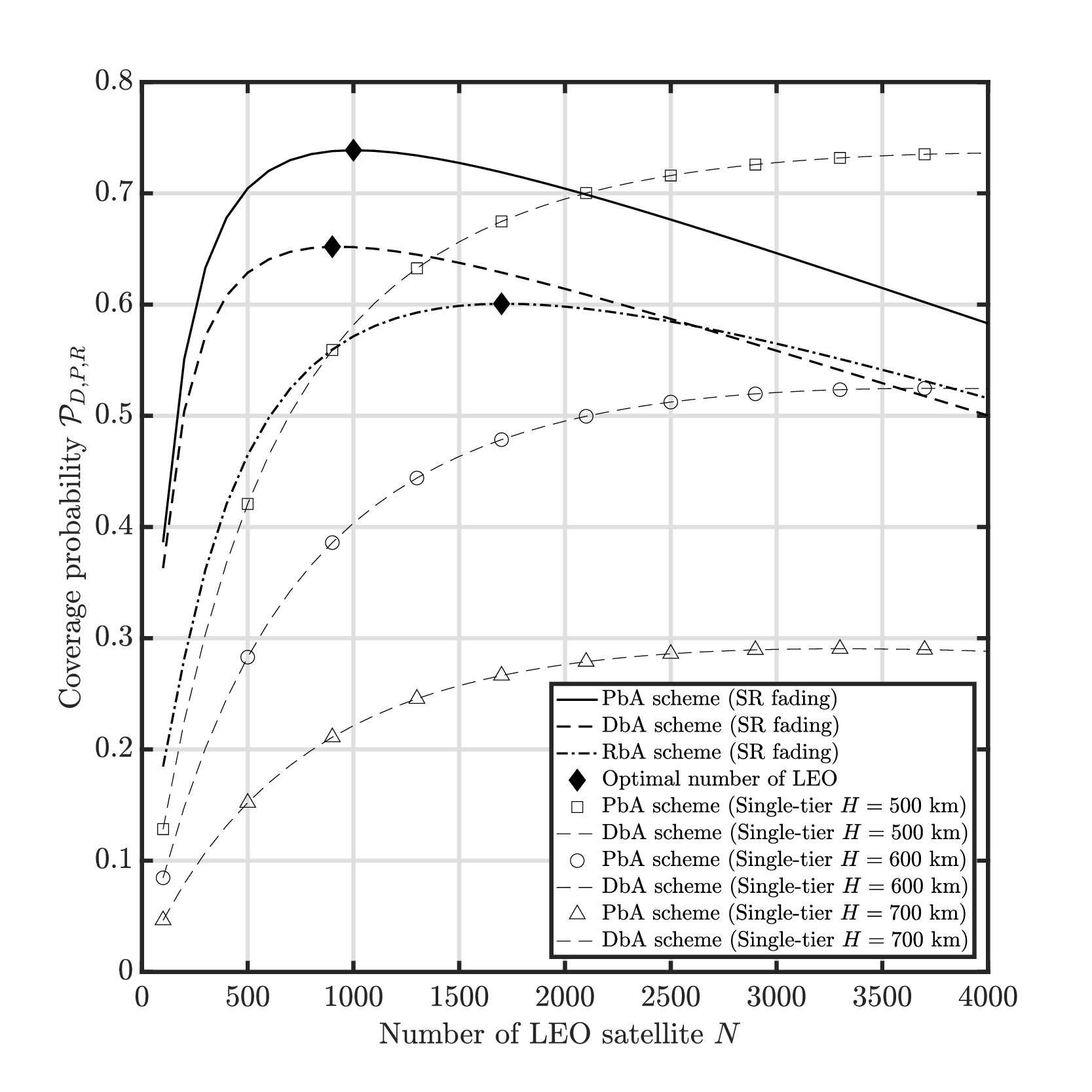}
		\caption{The coverage probability achieved by each association schemes versus the average number of LEO satellites, where $\beta=10$ dB and $N_1=N_2=N_3=N$.  }
		\label{numimp}
	\end{figure}
	Fig. \ref{numimp} demonstrates the effect of the average number of LEO satellites on the coverage performance. For the sake of simplicity, we consider that each tier comprises an equal average number of LEO satellites, i.e. $N_1=N_2=N_3=N$. Accordingly, Fig. \ref{numimp} plots the coverage probability versus the number of LEO satellites within each tier, for different association schemes, with the SINR threshold set to $\beta=10$ dB. We can clearly observe that for all association schemes, as the number of LEO satellites increases, there is an initial increase in coverage probability, which subsequently declines upon exceeding a specific threshold (denoted as ``Optimal number of LEO"). The observed enhancement in coverage probability can be explained by the decreasing distance between the serving LEO satellite and the gUE, which results in decreased path-loss and subsequently leads to an elevated SINR at the gUE. However, when the number of LEO satellites exceeds a certain limit, the increased number of interfering satellites leads to a severe multi-user interference, and thus to a reduced coverage probability. Another notable observation is that the PbA scheme consistently achieves the highest coverage probability among the three association schemes. Additionally, the DbA scheme outperforms the RbA scheme when the number of LEO satellites is relatively small, while as the number of LEO satellites increases, the RbA scheme surpasses the DbA scheme in terms of coverage probability. This observation can be explained by the fact that, the RbA scheme allows equal probability of association across tiers, increasing the chance of associating a gUE to satellites with higher transmit power from higher tiers, as opposed to DbA scheme which prioritizes proximity, resulting in frequent associations with lower-tier satellites with lower transmit power. Finally, we can observe that in the conventional single-tier network, the PbA and the DbA result in the same coverage probability. While the multi-tier network typically attains a higher coverage probability than the single-tier network, it is worth noting that with a large number of LEO satellites, the single-tier network can actually surpass it in coverage probability. This is based on the fact that the multi-tier network offers more access options than the single-tier network, which also leads to additional interference and thus finally results in a lower coverage probability than the single-tier network in scenario with dense deployments of LEO satellites.
	
	\begin{figure}[t]
		\centering
		\includegraphics[width=0.8\linewidth]{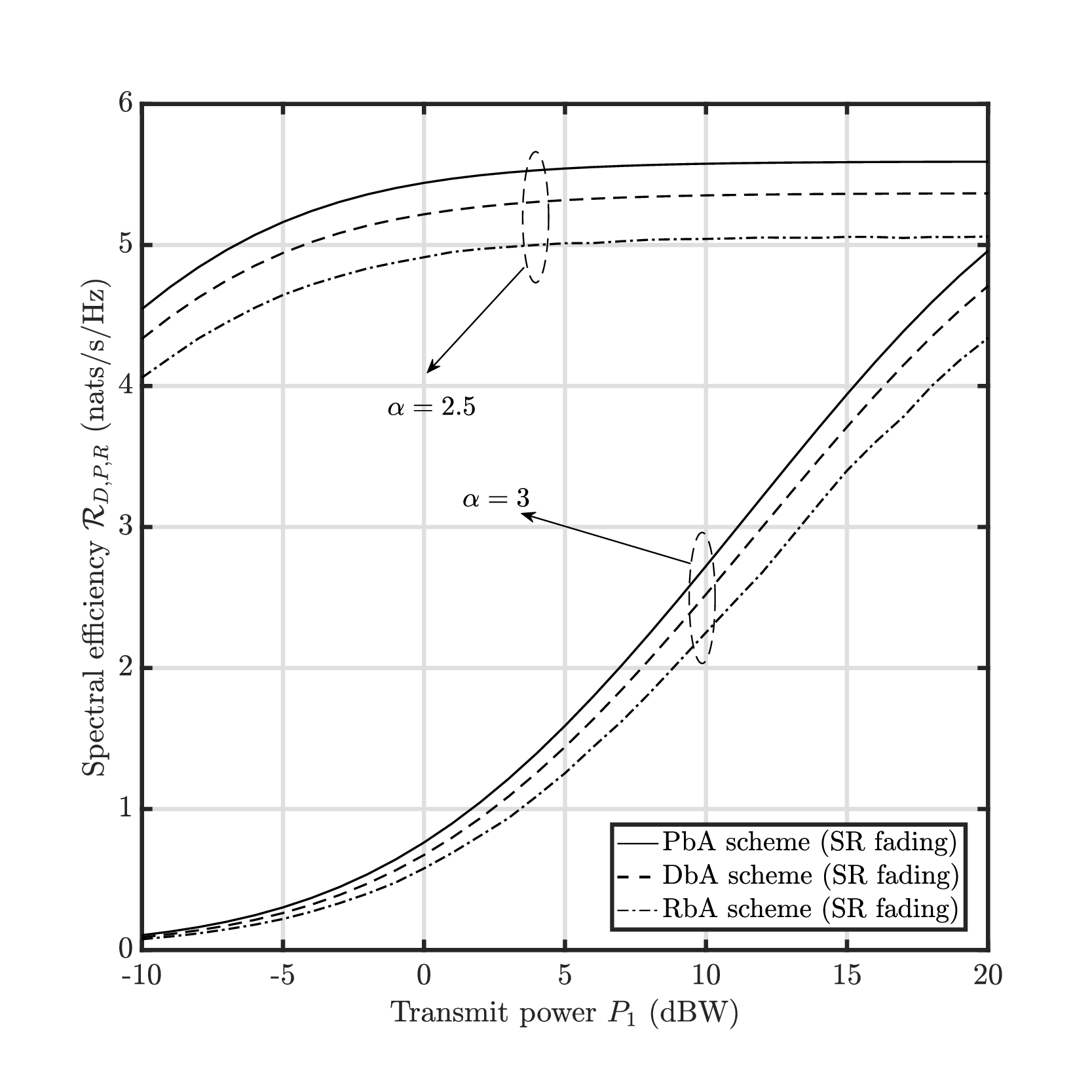}
		\caption{The spectral efficiency versus the transmit power of LEO satellites, where $P_2=\frac{H_2^{\alpha}G_{m,1}^L}{H_1^{\alpha}G_{m,2}^L}P_1$ and $P_3=\frac{H_3^{\alpha}G_{m,1}^L}{H_1^{\alpha}G_{m,3}^L}P_1$. }
		\label{datarate}
		\vspace{-8mm}
	\end{figure}
	
	Finally, Fig. \ref{datarate} shows the impact of LEO satellites' transmit power and the large-scale fading on the spectral efficiency of the considered networks. More specifically, Fig. \ref{datarate} plots the spectral efficiency versus the transmit power of the first tier LEO satellites for different path-loss exponent.  Firstly, for all association schemes, i.e. the DbA, the PbA, and the RbA schemes, a higher path-loss corresponds to a lower spectral efficiency, which can be attributed to the reduced signal strength at the receiver. Secondly, the PbA scheme consistently outperforms the DbA and RbA schemes, which is in line with the remarks presented in Fig. \ref{altitudeimpact} and Fig. \ref{numimp}. Finally, when the path-loss exponent is small (e.g., $\alpha=2.5$), increasing the transmit power initially leads to a higher data rate. However, the gUE experiences strong interference from neighbouring satellites as a result of the reduced signal attenuation, which counterbalances the benefits of higher transmit power, causing the data rate to plateau and remain constant at a saturation point.
	\section{Conclusion}\label{conclu}
	In this work, we present a comprehensive analysis of large-scale heterogeneous LEO satellite networks based on stochastic geometry. We introduce three association schemes and evaluate their performance in terms of association probability, coverage probability, and spectral efficiency, while analytical expressions are derived by leveraging tools from stochastic geometry. By aiming to reveal the impact of fading channels on network performance, we evaluate the coverage probability in different fading scenarios, i.e. SR, Rayleigh and non-fading channel models. Our results highlight the critical impact of fading channels, altitude, the number of satellites, and transmit power on the coverage probability and spectral efficiency. Specifically, we demonstrate that  the PbA scheme consistently outperforms the other two schemes, while the RbA scheme surpasses the DbA scheme as the number of satellites increases; the optimal number of LEO satellites for achieving the highest coverage probability is numerically demonstrated. We show that higher path-loss results in reduced spectral efficiency, and increasing transmit power initially enhances spectral efficiency which then remains constant due to the interference from neighbouring satellites. Our work offers valuable insights and guidance for the design and optimization of future large-scale heterogeneous LEO satellite networks.
	\appendices
	\section{Proof of Lemma \ref{contactdistance}}\label{proofcontactdistance}
	The proof follows a similar approach as \cite{Akram1}. We first compute the CDF of  $r_{0,k}$ based on the null probability of PPP \cite{Haenggi}, i.e.
	\begin{equation}\label{nullprob}
		\begin{aligned}
			&F_D(r,\lambda_k,H_k)=\mathbb{P}\{r_{0,k}\le r\}=1-\mathbb{P}\{r_{0,k}> r\}
			=1-\exp\left(-\lambda_k \Psi(r,H_k)\right),
		\end{aligned}
	\end{equation}
	\noindent where  $	\Psi(r,H_k)=\frac{\pi(r^2-H_k^2)(R_{\oplus}+H_k)}{R_{\oplus}}$ represents the area of the spherical dome as shown in Fig. \ref{proofappfig}.
	Finally, by taking the derivative of $F_D(r,\lambda_k,H_k)$ with respect to $r$, i.e. $f_D(r,\lambda_k,H_k)=\frac{\partial}{\partial r}F_D(r,\lambda_k,H_k)$, the final expression in Lemma \ref{contactdistance} can be derived.
	\begin{figure}[t]
		\centering
		\includegraphics[width=0.8\linewidth]{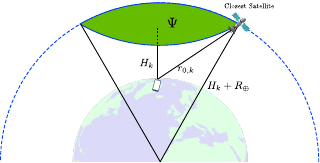}
		\caption{The closest LEO satellite of the $k$-th tier constellation. }
		\label{proofappfig}
		\vspace{-8mm}
	\end{figure}
	\section{Proof of Lemma \ref{associationprobability}}\label{proofassociationprobability}
	By employing the DbA scheme, the probability that the typical gUE is associated with the $k$-th tier LEO satellite can be expressed as $\mathcal{A}_{D,k} = \mathbb{P}\left\{r_{0,k}<\min_{j, j\neq k}r_{0,j}\right\}$. Note that for the considered heterogeneous LEO satellites networks, the orbit altitude is much smaller than the Earth radius, i.e. $H_k\ll R_{\oplus}$, and thus it is easy to verify that $H_{1}\le H_{2}\le \cdots \le H_{K} \le R_{\mathrm{max}_1}\le  R_{\mathrm{max}_2}\le\cdots\le R_{\mathrm{max}_K}$. Hence, by applying the law of total probability and by denoting $R_{{\rm max}_K}$ with $H_{K+1}$, we can have 
	\begin{equation}
		\mathcal{A}_{D,k}=\sum_{i=k}^{K}\mathbb{P}\left\{\min\limits_{{1\le j\le K,  j\neq k}} r_{0,j}> r_{0,k}~\&~ H_{i}\le r_{0,k}\le H_{i+1}\right\}.
	\end{equation}
	Upon observing that $r_{0,k}\le \min\limits_{1\le j\le K-i} r_{0,i+j}$ is valid for $H_{i}\le r_{0,k}\le H_{i+1}$, we can further simplify $\mathcal{A}_{D,k}$ as 
	\begin{equation}
		\mathcal{A}_{D,k}
		=\sum_{i=k}^{K}\mathbb{P}\left\{\min_{{1\le j\le i, j\neq k}}r_{0,j}> r_{0,k}~\&~ H_{i}\le r_{0,k}\le H_{i+1} \right\}.
	\end{equation} 
	Then, by noticing that $r_{0,j}$ and $r_{0,k}$, $\forall\ {}j\neq k$ are mutually independent, each term of the probability within $\mathcal{A}_{D,k}$ can be evaluated as
	\begin{equation}
		\begin{aligned}\label{AppB3}
			&\mathbb{P}\left\{\min_{{1\le j\le i, j\neq k}}r_{0,j}> r_{0,k}~\&~ H_{i}\le r_{0,k}\le H_{i+1} \right\}\\
			=&\int\nolimits_{H_{i}}^{H_{i+1}}{\prod_{j=1,j\neq k}^{i}\exp\left(-\frac{\pi\lambda_j(R_{\oplus}+H_j)(r^2-H_j^2)}{R_{\oplus}}\right)}f_D(r,\lambda_k,H_k)\mathrm{d}r\\
			=&\int\nolimits_{H_{i}}^{H_{i+1}}\prod_{j=1}^{i}\exp\left(-\frac{\pi\lambda_j(R_{\oplus}+H_j)(r^2-H_j^2)}{R_{\oplus}}\right)\frac{2\pi\lambda_kr(R_{\oplus}+H_k)}{R_{\oplus}} \mathrm{d}r,
		\end{aligned}
	\end{equation}
	where the last step is obtained by substituting the expression of $f_D(r,\lambda_k,H_k)$ given in \eqref{pdfcontactdistance} and by cancelling out the common factors of the numerator and denominator. 
	\section{Proof of Lemma \ref{AveragePowerPdf}}\label{ProofAveragePowerPdf}
	To begin with, we evaluate the CDF of $\xi_{0,k}$ as following
	\begin{equation}
		\begin{aligned}\label{CDFaveragepower}
			F_{P}(\xi,\lambda_k,H_k)
			=&\mathbb{P}\{\xi_{0,k}\le \xi\}=1-\mathbb{P}\left\{r_{0,k}<(\bar{h}P_kG^L_{m,k}G^U_m/\xi)^{\frac{1}{\alpha}}\right\}\\=&1-F_D\big((\bar{h}P_kG^L_{m,k}G^U_m/\xi)^{\frac{1}{\alpha}},\lambda_k,H_k\big).
		\end{aligned}
	\end{equation}
	The PDF of $\xi_{0,k}$ is obtained by taking the derivative of $F_P(\xi,\lambda_k,H_k)$ with respect to $\xi$, i.e. $f_P(\xi,\lambda_k,H_k)=\frac{\partial}{\partial \xi}F_P(\xi,\lambda_k,H_k)$.
	\section{Proof of Lemma \ref{AssociationProbabilityPbA}}\label{ProofAssociationProbabilityPbA}
	According to the procedure of the PbA scheme, the probability that the typical gUE is associated with the $k$-th tier LEO satellite is formulated as $\mathcal{A}_{P,k}=\mathbb{P}\left\{\xi_{0,k}>\max\limits_{{1\le j\le K, j\neq k}}\xi_{0,j}\right\}$.	The deployments of $K$ tiers of LEO satellites follow $K$ independent PPPs, which indicates that $\xi_{0,j}$ for $1\le j\le K$ are mutually independent between each other. Hence, based on the order statistics \cite{yang2011order}, we have
	\begin{align}
		\mathcal{A}_{P,k}=&\prod_{j=1,j\neq k}^{K}\mathbb{P}\left\{\xi_{0,k}>\xi_{0,j}\right\}=\mathbb{E}_{\xi_{0,k}}\left\{\prod_{j=1,j\neq k}^{K}\Big(1-F_P\left(\xi_{0,k},\lambda_j,H_j\right)\Big)\right\}\nonumber\\=&\mathbb{E}_{\xi_{0,k}}\left\{\prod_{j=1,j\neq k}^{K}\exp \left(-\pi\lambda_j\frac{R_{\oplus}+H_j}{R_{\oplus}}\left(\left(\frac{\bar{h}P_jG^L_{m,j}G_m^U}{\xi_{0,k}}\right)^{\frac{2}{\alpha}}-H_j^2\right)\right)\right\}.
	\end{align}
	Finally, by evaluating the expectation over $\xi_{0,k}$, the result in Lemma \ref{AssociationProbabilityPbA} is derived.
	\section{Proof of Lemma \ref{conditionaldistancepdf}}\label{proofconditionaldistancepdf}
	Without loss of generality, we consider that the typical gUE is associated with the $k$-th tier network for $1\le k\le K$.  Then, the CDF of the contact distance is computed as
	\begin{align}\label{eqq}
		F_{D,k}(r|\mathcal{A}_{D,k})=&\mathbb{P}\left\{r_{0,k}\le r\ {}\Big|\ {}r_{0,k}<\min_{\substack{1\le j\le K,\\  j\neq k}}r_{0,j}\right\}
		=1-\frac{\mathbb{P}\left\{r_{0,k}> r~\& ~ r_{0,k}<\min\limits_{{1\le j\le K,  j\neq k}}r_{0,j}\right\}}{\mathbb{P}\left\{r_{0,k}<\min\limits_{{1\le j\le K,  j\neq k}}r_{0,j}\right\}}.
	\end{align}
	Note that $\mathbb{P}\left\{r_{0,k}<\min_{1\le j\le K, j\neq k}r_{0,j}\right\}=\mathcal{A}_{D,k}$. Then, for the case where $H_i\le r<H_{i+1}$, the probability term in \eqref{eqq} is computed as
	\begin{align}\label{eqq2}
		&\mathbb{P}\left\{r< r_{0,k}<\min\limits_{{1\le j\le K,  j\neq k}}r_{0,j} ~\&~H_i\le r<H_{i+1}\right\}\nonumber\\
		&=\int\nolimits_{r}^{H_{i+1}}\prod_{j=1}^{i}\exp\left(-\frac{\pi\lambda_j(R_{\oplus}+H_j)(r^2-H_j^2)}{R_{\oplus}}\right)\frac{2\pi\lambda_kr(R_{\oplus}+H_k)}{R_{\oplus}}  \mathrm{d}r.
	\end{align}
	By taking into account all possible ranges of $r$ within $H_k\le r\le R_{\mathrm{max}_k}$ and by substituting \eqref{eqq2} into \eqref{eqq}, we obtain the final expression of $F_{D}(r|\mathcal{A}_{D,k})$, i.e.
	\begin{align}
		&F_{D,k}(r|\mathcal{A}_{D,k})\nonumber\\=&1-\frac{\sum_{i=k}^{K}\mathbbm{I}(H_i\le r<H_{i+1})\int\nolimits_{r}^{H_{i+1}}\prod_{j=1}^{i}\exp\left(-\frac{\pi\lambda_j(R_{\oplus}+H_j)(r^2-H_j^2)}{R_{\oplus}}\right)\frac{2\pi\lambda_kr(R_{\oplus}+H_k)}{R_{\oplus}}  \mathrm{d}r}{\mathcal{A}_{D,k}}.
	\end{align}
	Finally, the PDF of the contact distance, i.e. $f_{D,k}(r|\mathcal{A}_k)$, can be derived by taking the derivative of $F_{D,k}(r|\mathcal{A}_{D,k})$ with respect to $r$, i.e. $f_{D,k}(r|\mathcal{A}_k)=\frac{\partial}{\partial r}F_{D,k}(r|\mathcal{A}_{D,k})$.
	\section{Proof of Theorem \ref{coverageprobability}}\label{proofcoverageprobability}
	By the law of the total probability and according to the definition of the coverage probability, we have $\mathcal{P}_{D}(\beta)=\sum_{k=1}^{K}\mathcal{P}_{D,k}(\beta)\mathcal{A}_{D,k}$, where $\mathcal{P}_{D,k}(\beta)$ represents the coverage probability when the typical gUE is associated with the $k$-th tier, which is given by
	\begin{equation}
	\begin{aligned}\label{AppD1}
		\mathcal{P}_{D,k}(\beta)=&\mathbb{P}\left\{\frac{P_kh_{0,k}G_{m,k}^LG^U_mr_{0,k}^{-\alpha}}{\sum\nolimits_{x_{i,j}\in\cup_{j=1}^{K}\Phi_j^{\backslash x_{0,k}}} P_jh_{i,j}r_{i,j}^{-\alpha}G^L_{s,j}G^U_s+\sigma^2}\ge\beta\right\}\\
		=&\mathbb{E}_{r_{0,k},h_{i,j}}\left\{\mathbb{P}\left\{h_{0,k}\ge\frac{\beta\big(\sum\nolimits_{x_{i,j}\in\cup_{j=1}^{K}\Phi_j^{\backslash x_{0,k}}} P_jh_{i,j}r_{i,j}^{-\alpha}G^L_{s,j}G^U_s+\sigma^2\big)}{P_kG^L_{m,k}G^U_mr_{0,k}^{-\alpha}}\right\}\right\}\\
		\approx&\mathbb{E}_{r_{0,k}}\left\{\mathbb{P}\left\{h_{0,k}\ge\frac{\beta\big(\widetilde{\mathcal{I}}_{D}(r_{0,k})+\sigma^2\big)}{P_kG^L_{m,k}G^U_mr_{0,k}^{-\alpha}}\right\}\right\}
		=\mathbb{E}_{r_{0,k}}\left\{1-F_h\left(\frac{\beta\big(\widetilde{\mathcal{I}}_{D}(r_{0,k})+\sigma^2\big)}{P_kG^L_{m,k}G^U_mr_{0,k}^{-\alpha}}\right)\right\},
	\end{aligned}
	\end{equation}
	where $F_h(\cdot)$ is the CDF of channel power gain; $\widetilde{\mathcal{I}}_D(r_{0,k})$ is the mean of the aggregate interference power, and is computed as following
	\begin{equation}
	\begin{aligned}\label{eqd1}
		\widetilde{\mathcal{I}}_D(r_{0,k})=&\mathbb{E}\left\{\sum\nolimits_{x_{i,j}\in\cup_{j=1}^{K}\Phi_j^{\backslash x_{0,k}}} P_jG^L_{s,j}h_{i,j}r_{i,j}^{-\alpha}G^U_s\right\}\\
		\overset{(a)}{=}&\sum_{j=1}^{K}\int\nolimits_{0}^{2\pi}\int\nolimits_{\theta} \bar{h}\lambda_jP_j G^L_{s,j}r_{i,j}^{-\alpha}G^U_s(R_{\oplus}+H_j)^2\sin \theta\mathrm{d}\theta\mathrm{d}\phi\\
		=&\sum_{j=1}^{K}\int\nolimits_{\theta}2\pi\bar{h}\lambda_j  P_jG^L_{s,j}r_{i,j}^{-\alpha}G^U_s(R_{\oplus}+H_j)^2\sin \theta\mathrm{d}\theta,
	\end{aligned}
	\end{equation}
	where $(a)$ follows from the Campbell's Theorem of PPP with the spherical coordinates \cite{Haenggi,Akram1}. Then, by replacing $\theta$ with a function of $r$ based on the law of cosines, i.e. 
	\begin{equation}
		\theta=\arccos\left(\frac{R_{\oplus}^2+(R_{\oplus}+H_j)^2-r^2}{2R_{\oplus}(R_{\oplus}+H_j)}\right),
	\end{equation}
	we have
	\begin{equation}\label{eqd2}
		\begin{aligned}
			\mathrm{d}\theta=&\frac{r}{R_{\oplus} (H_j+R_{\oplus}) \sqrt{1-\frac{\left((H_j+R_{\oplus})^2-r^2+R_{\oplus}^2\right)^2}{4 R_{\oplus}^2 (H_j+R_{\oplus})^2}}}\mathrm{d}r,\\
			\sin\theta=&\sqrt{1-\frac{\left((H_j+R_{\oplus})^2-r^2+R_{\oplus}^2\right)^2}{4 R_{\oplus}^2 (H_j+R_{\oplus})^2}}.
		\end{aligned}
	\end{equation}
	Hence,  we can further rewrite $\widetilde{\mathcal{I}}_D(r_{0,k})$ as following
	\begin{align}\label{eqd3}
		\widetilde{\mathcal{I}}_D(r_{0,k})
		=&\sum_{j=1}^{K}\int\nolimits_{\max\{r_{0,k},H_j\}}^{R_{\mathrm{max}_j}}2\pi\bar{h}\lambda_j\frac{R_{\oplus}+H_j}{R_{\oplus}}P_jG^L_{s,j}r^{1-\alpha}G_s^U\mathrm{d}r.
	\end{align}
	Finally, by solving the above integral and by evaluating the expectation over $r_{0,k}$, the final results in Theorem \ref{coverageprobability} are derived.
	
	\section{Proof of Lemma \ref{ConditionAveragePowerPdf}}\label{ProofConditionAveragePowerPdf}
	Consider that the typical gUE is associated with the $k$-th tier, the CDF of $\xi_{0,k}$ is computed as
	\begin{align}\label{AppG1}
		F_P(\xi|\mathcal{A}_{P,k})=&\mathbb{P}\left\{\xi_{0,k}\le\xi\Big|\xi_{0,k}>\max_{{1\le j\le K, j\neq k}}\xi_{0,j}\right\}
		=\frac{\mathbb{P}\left\{\max\limits_{{1\le j\le K, j\neq k}}\xi_{0,j}<\xi_{0,k}\le\xi\right\}}{\mathbb{P}\left\{\xi_{0,k}>\max\limits_{{1\le j\le K, j\neq k}}\xi_{0,j}\right\}}.
	\end{align}
	Note that the denominator of \eqref{AppG1} is exactly the association probability presented in Lemma \ref{AssociationProbabilityPbA}, i.e.   $\mathbb{P}\left\{\xi_{0,k}>\max\limits_{\substack{1\le j\le K,~ j\neq k}}\xi_{0,j}\right\}=\mathcal{A}_{P,k}$; while the numerator can be further computed as
	\begin{align}\label{AppG2}
		&\mathbb{P}\left\{\max\limits_{\substack{1\le j\le K\\ j\neq k}}\xi_{0,j}<\xi_{0,k}\le\xi\right\}
		=\int_{\xi_{\mathrm{min}_k}}^{\xi}\prod_{\substack{j=1,j\neq k}}^{K}\mathbb{P}\left\{\xi_{0,k}>\xi_{0,j}\right\}f_P(\xi_{0,k},\lambda_k,H_k)\mathrm{d}\xi_{0,k}\nonumber\\
		=&\int_{\xi_{\mathrm{min}_k}}^{\xi}\prod_{\substack{j=1,j\neq k}}^{K}\big(1-F_P(\xi_{0,k},\lambda_j,H_j)\big)f_P(\xi_{0,k},\lambda_k,H_k)\mathrm{d}\xi_{0,k}\\
		=&\int_{\xi_{\mathrm{min}_k}}^{\xi}\prod_{j=1,j\neq k}^{K}\exp \left(-\pi\lambda_j\frac{R_{\oplus}+H_j}{R_{\oplus}}\left(\left(\frac{\bar{h}P_jG^L_{m,j}G_m^U}{\xi_{0,k}}\right)^{\frac{2}{\alpha}}-H_j^2\right)\right)f_P(\xi_{0,k},\lambda_k,H_k)\mathrm{d}\xi_{0,k}\nonumber.
	\end{align}
	Then by substituting \eqref{AppG2} into \eqref{AppG1} and by taking the partial derivative of $F_P(\xi|\mathcal{A}_{P,k})$ with respect to $\xi$, the conditional PDF of $\xi_{0,k}$ is derived.
	\section{Proof of Theorem \ref{coverageprobability2}}\label{proofcoverageprobability2}
	The proof follows a similar methodology presented in Appendix \ref{proofcoverageprobability}. In specific, based on the definition of the PbA scheme and by the law of the total probability, the coverage probability is formulated as $
	\mathcal{P}_{P}(\beta)=\sum_{k=1}^{K}\mathcal{P}_{P,k}(\beta)\mathcal{A}_{P,k}
	$, 
	where $\mathcal{P}_{P,k}(\beta)$ is given by
	\begin{equation}
		\begin{aligned}
			\mathcal{P}_{P,k}(\beta)=&
			\mathbb{P}\left\{\frac{P_kh_{0,k}G_{m,k}^LG^U_mr_{0,k}^{-\alpha}}{\sum\nolimits_{x_{i,j}\in\cup_{j=1}^{K}\Phi_j^{\backslash x_{0,k}}} P_jh_{i,j}r_{i,j}^{-\alpha}G^L_{s,j}G^U_s+\sigma^2}\ge\beta\right\}\\
			\approx&\mathbb{E}_{\xi_{0,k}}\left\{\mathbb{P}\left\{h_{0,k}\ge\frac{\beta\bar{h}\big(\widetilde{\mathcal{I}}_{P}(\xi_{0,k})+\sigma^2\big)}{\xi_{0,k}}\right\}\right\},
		\end{aligned}
	\end{equation}
	where $\widetilde{\mathcal{I}}_P(\xi_{0,k})$ is the mean of the aggregate interference signal power observed at the gUE with the employment of the PbA scheme, which is computed as
	\begin{equation}
		\begin{aligned}
			{\widetilde{\mathcal{I}}_P}(\xi_{0,k})= & \mathbb{E}\left\{\sum\nolimits_{x_{i,j}\in\cup_{j=1}^{K}\Phi_j^{\backslash x_{0,k}}} P_jh_{i,j}r_{i,j}^{-\alpha}G^L_{s,j}G^U_s\right\}
			=\bar{h}\mathbb{E}\left\{\sum\nolimits_{x_{i,j}\in\cup_{j=1}^{K}\Phi_j^{\backslash x_{0,k}}} P_jr_{i,j}^{-\alpha}G^L_{s,j}G^U_s\right\} \\
			=& \sum_{j=1}^{K}\int\nolimits_{0}^{2\pi}\int\nolimits_{\theta} \bar{h}\lambda_jP_j G^L_{s,j}r_{i,j}^{-\alpha}G^U_s(R_{\oplus}+H_j)^2\sin \theta\mathrm{d}\theta\mathrm{d}\phi\\
			=& \sum\limits_{j=1}^{K}\int_{\max\left\{\left({\varrho_k}/{\xi_{0,k}}\right)^{1/\alpha},H_j\right\}}^{R_{\mathrm{max}_j}}2\pi\bar{h}\lambda_j\frac{R_{\oplus}+H_j}{R_{\oplus}}P_jG^L_{s,j}r^{1-\alpha}G_s^U\mathrm{d}r,
		\end{aligned}
	\end{equation}
	where the last two steps directly follow from the methodology used for the derivation of $\widetilde{\mathcal{I}}_D(r_{0,k})$ presented in Appendix \ref{proofcoverageprobability}, but with a different lower limit for the integral. Finally, by applying the CDF of $h_{0,k}$, and by evaluating the average over $\xi_{0,k}$, the results in Theorem \ref{coverageprobability2} are proven.
	\section{Proof of Proposition \ref{RayleighDbA}}\label{RayleighDbAproof}
	We initially evaluate the coverage probability for DbA scheme under the Rayleigh fading. Considering that the typical gUE is associated with the $k$-th tier, then the coverage probability achieved with the DbA scheme is computed as
	\begin{align}\label{raycp}
		\mathcal{P}^{\dagger}_{D,k}(\beta)=&\mathbb{E}\left\{\mathbb{P}\left\{h_{0,k}\ge\frac{\beta\big(\sum\nolimits_{x_{i,j}\in\cup_{j=1}^{K}\Phi_j^{\backslash x_{0,k}}} P_jh_{i,j}r_{i,j}^{-\alpha}G^L_{s,j}G^U_s+\sigma^2\big)}{P_kG^L_{m,k}G^U_mr_{0,k}^{-\alpha}}\right\}\right\}\nonumber\\
		=&\mathbb{E}\left\{\exp\left(-\frac{\beta\big(\sum\nolimits_{x_{i,j}\in\cup_{j=1}^{K}\Phi_j^{\backslash x_{0,k}}} P_jh_{i,j}r_{i,j}^{-\alpha}G^L_{s,j}G^U_s+\sigma^2\big)}{P_kG^L_{m,k}G^U_mr_{0,k}^{-\alpha}}\right)\right\}\nonumber\\
		=&\mathbb{E}_{r_{0,k}}\left\{\mathbb{E}\left\{\exp\left(-{\varphi_D\sum\nolimits_{x_{i,j}\in\cup_{j=1}^{K}\Phi_j^{\backslash x_{0,k}}} P_jh_{i,j}r_{i,j}^{-\alpha}G^L_{s,j}G^U_s}\right)\right\}\exp(-\varphi_D\sigma^2)\right\}\nonumber\\
		=&\int_{H_k}^{R_{\mathrm{max}_k}}\mathcal{L}^{\dagger}_{\mathcal{I}_D}(\varphi_D)\exp(-\varphi_D\sigma^2)f_{D,k}(r_{0,k}|\mathcal{A}_{D,k})\mathrm{d}r_{0,k},
	\end{align}
	where the first step is based on the fact that for the Rayleigh fading model, the channel power gain is an exponential random variable with mean one, i.e. $h\sim \exp(1)$; $\varphi_D=\frac{\beta}{P_kG_{m,k}^LG_m^Ur_{0,k}^{-\alpha}}$ and $\mathcal{L}^{\dagger}_{\mathcal{I}_D}(z)$ is the Laplace transform of the interference, which is evaluated as following, i.e.
	\begin{equation}
		\begin{aligned}\label{lpeq}
			\mathcal{L}_{\mathcal{I}_D}^{\dagger}(\varphi_D)
			=&\mathbb{E}\left\{\prod\nolimits_{x_{i,j}\in\cup_{j=1}^{K}\Phi_j^{\backslash x_{0,k}}}\mathbb{E}_{h_{i,j}}\bigg\{\exp\left(-\varphi_D P_jh_{i,j}r_{i,j}^{-\alpha}G^L_{s,j}G^U_s\right)\bigg\}\right\}\\
			=&\mathbb{E}\left\{\prod\nolimits_{x_{i,j}\in\cup_{j=1}^{K}\Phi_j^{\backslash x_{0,k}}}\exp\left(\frac{1}{1+ \varphi_DP_jr_{i,j}^{-\alpha}G^L_{s,j}G^U_s}\right)\right\}\\
			=&\mathbb{E}\left\{\prod_{j=1}^{K}\exp\left(-2\pi\lambda_j\frac{R_{\oplus}+H_j}{R_{\oplus}}\int_{\max\{r_{0,k},H_j\}}^{R_{\mathrm{max}_j}}\left(1-\frac{1}{1+ \varphi_D P_jr^{-\alpha}G^L_{s,j}G^U_s}\right)r\mathrm{d}r\right)\right\},
		\end{aligned}
	\end{equation}
	where the last step is obtained by using the probability generating function of the PPP \cite{Haenggi}, where the above integral can be solved based on \cite[3.194.5]{intetable}. Finally, by applying the law of total probability, the coverage probability of DbA scheme in the Rayleigh fading scenario is derived. Note that the proof of the coverage probability of PbA and RbA schemes in the Rayleigh fading scenario, follows a similar methodology and thus is omitted. 
	\section{Proof of Proposition \ref{nonfadingDbA}}\label{nonfadingDbAproof}
	We provide the detailed proof of the coverage probability for the DbA scheme in the non-fading scenario, while a similar approach can be easily applied to the PbA and RbA schemes. Hence, by considering that the DbA scheme is employed and the typical gUE is associated with the $k$-th tier, the coverage probability  is formulated as
	\begin{equation}
	\begin{aligned}\label{AppF1}
		\mathcal{P}^{\ddagger}_{D,k}(\beta)=&\mathbb{P}\left\{\frac{P_kG_{m,k}^LG^U_mr_{0,k}^{-\alpha}}{\sum\nolimits_{x_{i,j}\in\cup_{j=1}^{K}\Phi_j^{\backslash x_{0,k}}} P_jr_{i,j}^{-\alpha}G^L_{s,j}G^U_s+\sigma^2}\ge\beta\right\}\\
		=&\mathbb{P}\left\{P_kG_{m,k}^LG^U_mr_{0,k}^{-\alpha}-\beta\sum\nolimits_{x_{i,j}\in\cup_{j=1}^{K}\Phi_j^{\backslash x_{0,k}}} P_jr_{i,j}^{-\alpha}G^L_{s,j}G^U_s\ge\beta\sigma^2\right\}\\
		=&\int_{H_k}^{R_{\mathrm{max}_k}}\mathbb{P}\left\{\chi(r_{0,k})\ge\beta\sigma^2\right\}f_{D,k}(r_{0,k}|\mathcal{A}_{D,k})\mathrm{d}r_{0,k},
	\end{aligned}
	\end{equation}
	where $\chi(r_{0,k})=\mathcal{S}(r_{0,k})-\beta\mathcal{I}_D(r_{0,k})$, $\mathcal{S}(r_{0,k})=P_kG_{m,k}^LG^U_mr_{0,k}^{-\alpha}$, and
	\begin{equation}
		\mathcal{I}_D(r_{0,k})=\sum\nolimits_{x_{i,j}\in\cup_{j=1}^{K}\Phi_j^{\backslash x_{0,k}}} P_jr_{i,j}^{-\alpha}G^L_{s,j}G^U_s.
	\end{equation}
	Then, by applying the Gil-Pelaez inversion theorem \cite{PELAEZ,Sayehvand,Guo}, the probability term in \eqref{AppF1} can be further computed as
	\begin{align}\label{AppF2}
		\mathbb{P}\left\{\chi(r_{0,k})\ge\beta\sigma^2\right\}=\frac{1}{2}-\frac{1}{\pi}\int_{0}^{\infty}\frac{\Im\left\{\psi_{\chi(r_{0,k})}(t)\exp(it\beta\sigma^2)\right\}}{t}\mathrm{d}t,
	\end{align}
	where $\Im\{ \cdot \}$ is the imaginary operator and $\psi_{\chi(r_{0,k})}$ is the characteristic function of $\chi(r_{0,k})$ which is given by
	\begin{align}\label{AppF3}
		\psi_{\chi(r_{0,k})}(t)=\mathbb{E}\left\{\exp\bigg(-it\big(\mathcal{S}(r_{0,k})-\beta\mathcal{I}_D(r_{0,k})\big)\bigg)\right\}=e^{-it\mathcal{S}(r_{0,k})}\mathcal{L}^{\ddagger}_{\mathcal{I}_D}(-it\beta),
	\end{align}
	where $\mathcal{L}^{\ddagger}_{\mathcal{I}_D}(z)=\mathbb{E}\left\{e^{-z\mathcal{I}_{D}(r_{0,k})}\right\}$ is the Laplace transform of the interference, which can be evaluated as following
	\begin{equation}
	\begin{aligned}\label{AppF4}
		\mathcal{L}^{\ddagger}_{\mathcal{I}_D}(z)&=\mathbb{E}\left\{\prod\nolimits_{x_{i,j}\in\cup_{j=1}^{K}\Phi_j^{\backslash x_{0,k}}}\exp\left(-{z P_jr_{i,j}^{-\alpha}G^L_{s,j}G^U_s}\right)\right\}\\
	=&\mathbb{E}\left\{\prod_{j=1}^{K}\exp\left(-2\pi\lambda_j\frac{R_{\oplus}+H_j}{R_{\oplus}}\int_{\max\{r_{0,k},H_j\}}^{R_{\mathrm{max}_j}}\big(1-\exp\left(-{z P_jr^{-\alpha}G^L_{s,j}G^U_s}\right)\big)r\mathrm{d}r\right)\right\}.
	\end{aligned}
	\end{equation}
	Hence, by solving the above integral and by substituting \eqref{AppF2}-\eqref{AppF4} into $\eqref{AppF1}$, the final expression of $\mathcal{P}_{D,k}^{\ddagger}(\beta)$ is obtained. Finally, by the law of total probability, the coverage probability of the DbA scheme for the non-fading scenario is derived. 
	\bibliographystyle{IEEEtran}
	\bibliography{refs}
\end{document}